\def\draft{1}  % 1 for draft version (eg including author notes), 0 for non-draft version
\patchcmd\Gread@eps{\@inputcheck#1 }{\@inputcheck"#1"\relax}{}{}
\newcommand{\talkingPoint}[1]{\ifthenelse{\equal{\draft}{1}}{{\color{brown}{---#1---}}}{}}
\newcommand{\YSHI}[1]{\ifthenelse{\equal{\draft}{1}}{{\color{blue}{---YY:#1---}}}{#1}}
\newcommand{\commentout}[1]{}
\newtheorem{Theorem}{Theorem} 
\newtheorem{Lemma}[Theorem]{Lemma}
\newtheorem{Proposition}[Theorem]{Proposition}
\newtheorem{Corollary}[Theorem]{Corollary}
\newtheorem{Definition}[Theorem]{Definition}
\DeclarePairedDelimiter\bra{\langle}{\rvert}
\DeclarePairedDelimiter\ket{\lvert}{\rangle}
\DeclareMathOperator{\Tr}{Tr}
\def\Tr{\textnormal{Tr}}
\def\<{\langle}
\def\>{\rangle}
\numberwithin{theorem}{section}
\numberwithin{equation}{section}
\title{Limitations on Transversal Computation through Quantum Homomorphic Encryption}
\author{%
   Michael Newman$^1$ and Yaoyun~Shi$^2$  \\
 \\
  $^1$Department of Mathematics\\
  \smallskip
  $^2$Department of Electrical Engineering and Computer Science\\
  University of Michigan, Ann Arbor, MI 48109, USA\\
  \texttt{{mgnewman@umich.edu, shiyy@umich.edu}}
}
\date{}
\begin{document}
\maketitle
\thispagestyle{empty}

%%% abstract is limited to 150 words for Nature
\begin{abstract}
%!TEX root =  NewmanS16_homo.tex
\noindent
Transversality is a simple and effective method for implementing quantum computation fault-tolerantly.
However, no quantum error-correcting code (QECC) can transversally implement a quantum universal gate set (Eastin and Knill, {\em Phys. Rev. Lett.}, 102, 110502). Since reversible classical computation is often a dominating part of useful quantum computation, whether or not it can be implemented transversally is an important open problem. We show that, other than a small set of non-additive codes that we cannot rule out, no binary QECC can transversally implement a classical reversible universal gate set.  In particular, no such QECC can implement the Toffoli gate transversally.  
%The exceptional set is essentially comprised of non-additive codes that are the concatenation of a repetition code with some distance $1$ inner code, similar to Shor's code written as a $3$-fold product of $GHZ$ states.

We prove our result by constructing an information theoretically secure (but inefficient) quantum homomorphic encryption (ITS-QHE) scheme inspired by Ouyang {\em et al.} (arXiv:1508.00938).  Homomorphic encryption allows the implementation of certain functions directly on encrypted data, i.e. homomorphically.  Our scheme builds on almost any QECC, and implements that code's transversal gate set homomorphically.  We observe a restriction imposed by Nayak's bound  ({\em FOCS} 1999) on ITS-QHE, implying that any ITS quantum {\em fully} homomorphic scheme (ITS-QFHE) implementing the full set of classical reversible functions must be highly inefficient.  While our scheme incurs exponential overhead, any such QECC implementing Toffoli transversally would still violate this lower bound through our scheme.
\end{abstract}

\clearpage
\setcounter{page}{1}

%!TEX root =  NewmanS16_homo.tex

\section{Introduction}

\subsection{Restrictions on transversal gates}

Transversal gates are surprisingly ubiquitous objects, finding applications in quantum cryptography \cite{Ouyang:2015}, \cite{Lai:2017}, quantum complexity theory \cite{Broadbent:2016}, and of course quantum fault-tolerance. The instability of quantum information is well-documented, and quantum error-correcting codes \cite{Knill:1996} allow the encoding of single qubits into multiple qubit systems so that errors on small subsets of physical qubits can be corrected \cite{Gottesman:1997}.  Performing computations on these codes carries the risk of propagating errors between different subsystems, unless the code can implement the computation in a way that preserves the subsystem structure.  Informally, these types of logical operators that decompose as a product across the subsystems are called \emph{transversal}, and the oft-cited Eastin-Knill theorem \cite{Eastin:2009}, \cite{Zeng:2007} limits the ability of quantum codes to prevent this error propagation.

\begin{Theorem}[Eastin-Knill]
No quantum error-correcting code can implement a quantum universal transversal gate set.
\end{Theorem}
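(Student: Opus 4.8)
The plan is to derive a contradiction from the following tension: a universal set of transversal gates would equip the codespace with a nontrivial one-parameter family of logical unitaries generated by a Hamiltonian supported on single physical subsystems, whereas the Knill--Laflamme error-detection conditions force every such ``local'' generator to act on the codespace as a scalar, hence logically trivially.

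\textbf{Setup and closure.} First I would fix a quantum error-correcting code with codespace $\mathcal{C}$ of logical dimension $d \ge 2$, code projector $P$, and distance at least $2$ (which holds for any nontrivial code), together with the partition of the physical system into subsystems defining transversality. Let $K$ be the group of all physical unitaries that factor as a tensor product across this partition; $K$ is a product of unitary groups, hence a compact Lie group, and it is closed in the full unitary group. A universal transversal gate set generates a subgroup $G \le K$, every element of which preserves $\mathcal{C}$, and universality says the induced logical action is dense in $PU(d)$. I would then pass to the closure $\overline{G}$ inside $K$: it is a closed subgroup of a compact Lie group, hence itself a compact Lie group, and it still consists of transversal, code-preserving unitaries. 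The logical-action homomorphism $\phi \colon \overline{G} \to PU(d)$ is continuous with compact domain, so its image is closed as well as dense; therefore $\phi$ is onto $PU(d)$.

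\textbf{The Lie-algebra step.} Surjectivity of $\phi$ forces its differential $d\phi \colon \mathfrak{g} \to \mathfrak{su}(d)$ to be surjective, so we may pick $X \in \mathfrak{g}$ with $d\phi(X) \ne 0$. Since $\overline{G} \subseteq K$, the algebra $\mathfrak{g}$ lies in $\mathrm{Lie}(K)$, which under the tensor embedding is precisely the space of sums of anti-Hermitian operators each supported on a single subsystem; thus $X = \sum_i X_i$ with each $X_i$ acting only on subsystem $i$. Because $e^{tX}$ preserves $\mathcal{C}$ for all $t$, we have $P e^{tX} P = e^{tX} P$, and differentiating at $t=0$ gives $XP = PXP$. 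Now the detection conditions enter: a code of distance at least $2$ detects every weight-$\le 1$ error, so $P X_i P = c_i P$ for scalars $c_i$, whence $PXP = cP$ with $c = \sum_i c_i$ (imaginary, as $X$ is anti-Hermitian). Combining the two identities yields $XP = cP$, i.e.\ $X$ acts on $\mathcal{C}$ as the scalar $c$; hence $e^{tX}$ acts on $\mathcal{C}$ as the global phase $e^{ct}$, so $d\phi(X) = 0$, contradicting the choice of $X$. The theorem follows.

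\textbf{Main obstacle.} The delicate part is not the final computation but erecting the Lie-theoretic scaffolding correctly: one must verify that passing to the topological closure of the group of transversal logical gates preserves both transversality and a genuine compact Lie-group structure, and that its Lie algebra is consequently spanned by single-subsystem generators --- this is the exact point where ``transversality'' becomes ``locality.'' A secondary point requiring care is the bookkeeping of global phases, i.e.\ working throughout with $PU(d)$ and $\mathfrak{su}(d)$ rather than $U(d)$, since the entire contradiction rests on the logical action being trivial precisely modulo phase. Everything downstream --- differentiating $P e^{tX} P = e^{tX} P$ and applying the Knill--Laflamme conditions to single-subsystem operators --- is routine.
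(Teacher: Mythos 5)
Your proof is correct and is essentially the original Eastin--Knill argument: the paper does not reprove this theorem but cites it, and the informal description it gives (``infinitesimal transversal operations are themselves linear combinations of local error operators... must act identically on the codespace'') is exactly the Lie-algebra step you carry out, namely that $\mathfrak{g}\subseteq\mathrm{Lie}(K)$ consists of sums of weight-one generators which the Knill--Laflamme condition forces to act as scalars on $P$. Your handling of the closure to get a compact Lie group and of global phases via $PU(d)$ and $\mathfrak{su}(d)$ is sound, so the argument stands as written.
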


These transversal gate sets are valuable as most models of fault-tolerant quantum computation implement associated transversal gate sets fault-tolerantly ``for free''.  Incurring comparatively significant overhead, often in the form of magic state distillation \cite{Fowler:2013}, \cite{Jones:2013}, gauge fixing \cite{Bombin:2013}, \cite{Poulsen:2016}, or more recently deconstructions of non-transversal gates into fault-tolerant pieces \cite{Yoder:2016}, one can fault-tolerantly implement some remaining gate set making the computation space universal.  Improving the efficiency of this overhead and designing new fault tolerant architectures to supplement transversal gates is central to quantum fault tolerance.   

Implementing fault-tolerant classical reversible computation efficiently would be extremely desirable as many quantum algorithms are primarily classical subroutines with a relatively small number of quantum gates, and there have been several proposals for doing so \cite{Paetznick:2013}, \cite{Cross:2008}, \cite{Jones:2012}.  For example, factoring a cryptographically large RSA key using Shor's algorithm requires around $3 \times 10^{11}$ Toffoli gates to perform modular exponentiation alone, and is the dominating portion of the circuit \cite{Fowler:2012}.  As Toffoli is universal for classical reversible computation, one might ask if there are any quantum error-correcting codes that can naturally implement Toffoli, and thus classical computations, transversally?  We give restrictions on the ability of QECCs to do this.

\begin{Theorem} [Informal]
Almost no quantum error-correcting code can implement a classical universal transversal gate set.  In particular, almost no quantum error-correcting code can implement the Toffoli gate transversally.
\end{Theorem}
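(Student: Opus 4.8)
The plan is to turn a transversal implementation into an information-theoretically secure quantum homomorphic encryption (ITS-QHE) scheme and then invoke a quantitative impossibility coming from Nayak's bound. Suppose $C$ is a binary QECC of blocklength $n$ encoding $k\ge 1$ logical qubits, and suppose its transversal gate set $\mathcal{G}$ generates (on the logical bits, with extra logical qubits prepared as basis states serving as ancillas) the full group of classical reversible transformations --- in particular this holds whenever $\mathcal{G}$ contains the Toffoli gate. First I would build from $C$ a scheme $\mathcal{E}_C=(\mathrm{Gen},\mathrm{Enc},\mathrm{Eval},\mathrm{Dec})$: $\mathrm{Enc}$ encodes the plaintext into the logical registers of $C$ together with a secret-key-controlled randomization (a uniformly random logical Pauli together with a random code automorphism compatible with $\mathcal{G}$, in the spirit of Ouyang \emph{et al.}); $\mathrm{Eval}(g)$ for $g\in\mathcal{G}$ simply applies the transversal implementation of $g$ to the physical qubits and updates the classical description of the key; $\mathrm{Dec}$ decodes $C$ and undoes the randomization. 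Correctness is then immediate: because each $g\in\mathcal{G}$ is transversal it preserves the subsystem decomposition and realizes $\bar g$ on the encoded data, and since $\mathcal{G}$ generates we can homomorphically evaluate an arbitrary classical reversible circuit by composing these steps.

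The heart of the construction is the security analysis: one must show that the key-controlled randomization makes $\mathrm{Enc}$, averaged over the key, independent of the plaintext, so that $\mathcal{E}_C$ is genuinely ITS. For stabilizer (additive) codes the stabilizer group together with the logical Pauli group supplies enough compatible symmetry to carry this out exactly, and more generally it works for any code admitting a suitable randomizing subgroup that commutes appropriately with every transversal gate. The codes for which no such randomization is available --- a residual family of non-additive codes --- are exactly the cases the theorem leaves open. I would isolate this obstruction precisely, verify that it is vacuous for all additive codes and for all but a small, explicitly described non-additive family, and thereby obtain $\mathcal{E}_C$ as an ITS-QHE whose homomorphic gate set is $\mathcal{G}$, hence an ITS \emph{fully} homomorphic scheme for classical reversible computation whenever $\mathcal{G}$ is classically universal.

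Finally I would derive the contradiction from Nayak's random-access-code lower bound. Because the homomorphic evaluation in $\mathcal{E}_C$ is public and the key is sampled independently of the plaintext, the ciphertext of an input string --- measured by ``homomorphically extract a chosen output bit, then decrypt'' --- is a quantum random access code for that string, and, following the argument through composition of evaluations, Nayak's bound lower-bounds the joint ciphertext/key/overhead growth of \emph{any} ITS-QFHE for classical reversible computation; this is the precise sense in which such a scheme must be highly inefficient. The generic construction of $\mathcal{E}_C$ already incurs exponential overhead, but a code $C$ carrying a classically universal transversal gate set --- Toffoli in particular --- would make $\mathcal{E}_C$ evaluate that universal set with no growth per gate, placing it strictly below the floor just obtained, a contradiction; hence no binary QECC outside the exceptional non-additive family can carry such a gate set. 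I expect the two genuinely hard steps to be (i) the security analysis of $\mathrm{Enc}$ --- producing an exact randomization compatible with \emph{every} transversal gate, which is where additivity is used and where the non-additive gap opens --- and (ii) making the Nayak accounting quantitative enough that the construction's own exponential overhead still lands strictly below the impossibility threshold.
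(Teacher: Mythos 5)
Your high-level plan---convert a transversal classical-universal gate set into an ITS-QHE scheme and then invoke Nayak's QRAC bound to derive a contradiction with full homomorphism---is exactly the paper's strategy, and your account of the Nayak accounting at the end is essentially right. But the encryption you propose is not the paper's, and it would not actually work, which is the crux.

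You propose to encrypt by a secret random logical Pauli together with a random code automorphism ``compatible with $\mathcal{G}$.'' This is the Pauli-one-time-pad idea, and it collapses at exactly the gate you need: conjugating a Pauli by Toffoli does not produce a Pauli, so the key update after a homomorphic Toffoli is no longer a classical Pauli record. This is precisely why existing QHE constructions that use Pauli padding are restricted to Clifford circuits and need separate machinery (e.g.\ gadgets, magic-state injection, or interaction) for non-Clifford gates. Asking instead that the randomizing group commute ``appropriately'' with every transversal gate does not rescue this: for a classically universal $\mathcal{G}$ the only group of unitaries that is simultaneously normalized by all of $\mathcal{G}$ and diagonalizable by the encryption's decoding map would be far too small to wash out the plaintext. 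Your proposed source of security therefore does not deliver both ITS and homomorphism at the same time, and the scheme $\mathcal{E}_C$ you describe does not exist for a Toffoli-transversal code. (It is also not what Ouyang et al.\ do, despite the attribution.)

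The paper's construction avoids this obstruction entirely by never randomizing the logical data. Instead, it \emph{withholds} a correctable set of subsystems from the code (which the client keeps), and embeds the remaining subsystems at secret positions inside blocks of maximally mixed qubits. A transversal gate can then be applied blindly to every column of every block; the client recovers by treating the withheld subsystems as known-location erasure errors. Security comes from the fact that, after tracing out the withheld subsystems, each encoded block is genuinely mixed, and averaging over the secret embedding pushes the ciphertext exponentially close to uniform in $1$-norm. This is a mixedness/nonlocality argument, not a group-theoretic one, and it makes the exceptional family fall out for a different reason than the one you identify: the scheme fails precisely when tracing out a correctable set cannot mix the state, i.e.\ for the $d$-fold ``maximally redundant'' codes whose logical basis states factor into $d$ distance-$1$ pieces (Shor-code style). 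The additive ones among these are then ruled out by a separate stabilizer argument, so the residual gap is the \emph{non-additive} maximally redundant codes. Your exception class---codes without a compatible randomizing subgroup---is not the right characterization, and with your security mechanism you could not isolate it cleanly. So: right skeleton, wrong organ.
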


The only exceptions to our theorem are non-additive distance $d$ codes that decompose as $d$-fold product states in their logical computational basis, where each ``subcode'' itself fails to be erasure-correcting.  Essentially, one can think of these as maximally redundant quantum codes: they are the concatenation of a repetition code with some distance $1$ inner code, similar to Shor's stabilizer code written as a $3$-fold product of $GHZ$ states.  We do not expect that any such code can implement Toffoli transversally, but it remains a case our proof technique cannot rule out.  In particular, our proof does apply to all binary additive codes.  The result is perhaps slightly surprising since there exist QECCs (e.g. triorthogonal codes) that can implement the $CCZ$ gate transversally \cite{Paetznick:2013}, and in fact transversal Toffoli gates can map between different quantum Reed-Solomon codes by increasing the degree of the underlying polynomial \cite{Cross:2008}.

\subsection{Quantum homomorphic encryption}

The main ingredient in our proof is an information-theoretically secure homomorphic encryption scheme.  Generally, homomorphic encryption \cite{Gentry:2009} is a means of delegating computation on sensitive data securely.  It allows for the encryption of data in such a way that another party can perform meaningful computation on the ciphertext {\it without} decoding, preserving the security of the underlying plaintext.  A scheme is termed {\em fully} homomorphic encryption (FHE) if it can implement a universal class of functions in some computation space.

Recently, extensions of homomorphic encryption to the quantum setting have been considered.  Instead of encrypting classical data and implementing addition and multiplication gates homomorphically, quantum homomorphic encryption aims to encrypt quantum data and implement unitary gates homomorphically.  Progress was made in \cite{Broadbent:2014}, and recently \cite{Dulek:2016} extended this work to a leveled scheme that could homomorphically implement all polynomial-sized quantum circuits.

The aforementioned schemes are only computationally secure, since they use classical FHE as a subroutine.  This is no great indictment: FHE is built on the difficulty of certain hard lattice problems that are leading candidates for quantum-secure encryption \cite{Cramer:2015}, \cite{Peikert:2015}.  However, quantum information often promises information-theoretic security (ITS) guarantees that are impossible classically.  Intermediate advances have also been made in this more restrictive setting.  One such scheme allows for the implementation of a large class of unitaries homomorphically, but with less stringent ITS guarantees \cite{Tan:2014}.  

More recently, \cite{Ouyang:2015} proposed a compact ITS-QHE scheme in which the size of the encoding scales polynomially with size of the input for the limited Clifford circuit class.  This scheme achieves the strongest notion of imperfect ITS, with the probability of distinguishing between any two ciphertexts exponentially suppressed in the size of the encoding.
 
The scheme in \cite{Ouyang:2015} is based on a ``noisy'' quantum encoding of the data.  They take an encoding circuit for a particular quantum code and replace the ancilla bits of the encoding with uniformly random noise.  Their encryption is then choosing a random embedding of this code into yet more uniformly random noise.  This scheme links ITS-QHE to transversal gates: the transversal gates for their code are exactly those gates that can be implemented homomorphically.

\subsection{Limitations on ITS-QHE}
There are fundamental limitations on what ITS homomorphic encryption can do.  It is known that for a purely classical scheme, efficient ITS-FHE is impossible, violating lower bounds in the setting of single server private information retrieval \cite{Fillinger:2012}.  It was further shown that in the best case scenario, when the mutual information between the plaintext and the ciphertext is precisely zero, efficient \emph{quantum} FHE is impossible \cite{Yu:2014}.  

This no-go result actually applies to the more restrictive setting of classical data being encrypted into quantum data, while allowing only classical reversible functions to be evaluated homomorphically.  Both \cite{Yu:2014} and \cite{Ouyang:2015} ask whether relaxing to imperfect ITS-security might allow for efficient ITS-QFHE.  Unfortunately, this is not the case. 

\begin{Proposition}\label{concurrent}[Informal]
Efficient ITS-QFHE is impossible.
\end{Proposition}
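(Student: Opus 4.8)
The plan is to derive a contradiction from the existence of an efficient imperfect-ITS quantum fully homomorphic encryption scheme for classical reversible computation, using Nayak's bound on quantum random access codes. First I would set up the adversary's attack: the client encrypts an $n$-bit classical string $x$, obtaining a ciphertext state $\rho_x$ on $m = \mathrm{poly}(n)$ qubits together with a (private) classical key. Because the scheme is fully homomorphic over the classical reversible gate set, the evaluator can, for any index $i \in [n]$, homomorphically apply the reversible circuit that maps $x \mapsto (x_i, \text{garbage})$ — e.g. a sequence of CNOTs copying $x_i$ onto a fresh ancilla followed by uncomputation padding — producing a ciphertext encrypting the single bit $x_i$ (in the appropriate register). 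The key observation is that, after homomorphic evaluation, the resulting ciphertext on $x_i$ must itself be $\epsilon$-close (in trace distance, over the choice of key) to the fixed encryption distribution of a one-bit message; combined with the imperfect-ITS guarantee that ciphertexts of distinct plaintexts are $\epsilon$-indistinguishable, one shows that from $\rho_x$ alone (without the key) one can recover any desired bit $x_i$ with probability bounded below by $\tfrac{1}{2} + \delta$ for some constant $\delta$ depending only on the scheme's error parameters, not on $n$.

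Next I would invoke Nayak's bound: an $(n, m, p)$-quantum random access code, which encodes $n$ classical bits into $m$ qubits so that any one bit can be recovered with probability $p > 1/2$, requires $m \geq (1 - H(p))\, n$, where $H$ is the binary entropy. Since the decoding success probability $p = \tfrac{1}{2} + \delta$ is bounded away from $1/2$ by a constant independent of $n$, this forces $m = \Omega(n)$ — which by itself is not yet a contradiction. To get the impossibility of \emph{efficiency}, I would iterate the homomorphic evaluation: rather than extracting one bit, the adversary homomorphically evaluates reversible circuits that permit recovering many bits, or more directly, one argues that a QFHE scheme compact enough to be called "efficient" would let us pack the $n$-bit plaintext into a ciphertext of size violating the $m \geq (1-H(p))n$ threshold once $p$ is pushed close enough to $1$ by parallel repetition of the bit-extraction attack across independent copies. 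The cleanest route: take $k$ independent encryptions of the same bit $x_i$ produced homomorphically, decode by majority vote to boost $p \to 1 - \mathrm{negl}$, apply Nayak to the $km$-qubit combined system, and compare against the efficiency assumption $m = \mathrm{poly}(n)$ to find the regime where $n$ exceeds $(1 - H(p)) \cdot (\text{total ciphertext size})$, contradicting the random access bound.

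I expect the main obstacle to be making the bit-extraction step rigorous without access to the decryption key: the evaluator can produce a ciphertext \emph{encrypting} $x_i$, but reading $x_i$ out of it seemingly still requires the key. The resolution — and the technical heart of the argument — is that imperfect ITS security is a statement about ciphertexts being close to a \emph{key-independent} mixture, so the ciphertext encrypting $x_i = 0$ and the one encrypting $x_i = 1$, averaged over keys, are within $2\epsilon$ of two fixed states $\sigma_0, \sigma_1$; if these two states were $2\epsilon$-indistinguishable for all $\epsilon$ the scheme would carry no information and decryption would be impossible, so correctness of the scheme forces $\|\sigma_0 - \sigma_1\|_1$ to be bounded below by a constant, and \emph{that} gap is exactly the distinguishing advantage the keyless adversary exploits. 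Pinning down the precise relationship between the scheme's correctness parameter, its ITS parameter $\epsilon$, and the resulting $\delta$ — and verifying that homomorphic evaluation does not blow up $\epsilon$ — is the careful bookkeeping the full proof must carry out; everything else is then a direct application of Nayak's bound.
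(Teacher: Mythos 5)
Your proposal has a genuine gap, and it is one you half-notice yourself but then patch incorrectly. The object you feed into Nayak's bound is wrong: you encode the $n$-bit \emph{input} $x$ and try to recover bits $x_i$, which can only ever yield an $m = \Omega(n)$ lower bound; that is perfectly compatible with $m = \mathrm{poly}(n)$ and so never produces a contradiction. Your subsequent fixes (parallel repetition, majority vote across $k$ copies, boosting $p \to 1$) do not help: they scale both sides of Nayak's inequality by the same factor, and the right-hand side stays linear in $n$. The paper instead treats the scheme as a random access encoding of the \emph{function} $f$, viewed as its $2^n$-bit truth table. Concretely: fix the input $0^n$, let $\eta_{f,0^n}$ be the joint client--server state after the server homomorphically evaluates $f$ on $QHE.Enc(0^n)$; this state has at most $m + m'$ qubits and depends on $f$. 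By ITS security, the client-side purifications satisfy $s_x \approx (V_x \otimes I)\, s_{0^n}\, (V_x \otimes I)^\dagger$ for some isometry $V_x$ on the client register, and this $V_x$ commutes with the server's evaluation; so applying $V_x$ to $\eta_{f,0^n}$ and running $QHE.Dec$ recovers $f(x)$ with probability $\geq 1-\epsilon$. That is a $(2^n,\, m+m',\, 1-\epsilon)$-QRAC, and Nayak then forces $m+m' = \Omega(2^n)$.

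The second problem is your resolution of what you correctly call the technical heart --- reading $x_i$ out of a ciphertext without the key. You argue that the keyless averaged states $\sigma_0, \sigma_1$ encrypting $x_i=0$ and $x_i=1$ must satisfy $\|\sigma_0 - \sigma_1\|_1 \geq \mathrm{const}$, ``because correctness of the scheme forces it.'' This is false and in fact directly contradicts ITS security: the $\epsilon$-ITS guarantee asserts precisely that $\|\sigma_0 - \sigma_1\|_1 \leq \epsilon$, and since homomorphic evaluation is a quantum channel, the same bound propagates by contractivity of trace distance to the evaluated ciphertexts. Correctness of decryption is not in tension with this because the decryptor holds the client register (the purifying/key system), which your keyless adversary does not. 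The paper's construction avoids this trap entirely: the QRAC decoder is allowed the client register, since it is counted in the encoding length $m+m'$; what makes the bound bite is that the function space is of size $2^{2^n}$, not that the adversary is denied the key.
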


Concurrent to this work, this proposition was observed in \cite{Lai:2017}.  We provide a precise statement and proof of this restriction in Appendix \ref{AppendixITSQHE}.  This result can be seen by combining the proof technique in \cite{Fillinger:2012} with similar single server private information retrieval bounds in the quantum setting \cite{Baumeler:2013}. In essence, the inefficiency of ITS-QFHE follows from viewing ITS-QHE (on \emph{classical data} using a \emph{quantum encoding}) as a certain quantum random access encoding (QRAC) (see \cite{Ambainis:2008}) of the function class we wish to implement homomorphically.  Well-known bounds on QRACs \cite{Nayak:1999} place lower bounds on the encoding size of such a scheme, precluding efficiency.  Using a variant of the code-based ITS-QHE scheme proposed in \cite{Ouyang:2015}, we can then argue that (almost) any QECC implementing the Toffoli gate transversally would yield a scheme violating this lower bound. 

It is worth noting that similar tasks such as blind quantum computation \cite{Broadbent:2008} and computing on encrypted data \cite{Fisher:2013}, \cite{Broadbent:2015} allow ITS solutions, but they do so at the cost of interactivity between the client and server.  We do not allow this interactivity in our definition of homomorphic encryption.

\subsection{Comparison to related works}

The five works the most closely resemble our results are \cite{Eastin:2009},\cite{Zeng:2007} and \cite{Bravyi:2013}, which place restrictions on transversal gate sets for QECCs, and \cite{Ouyang:2015} and \cite{Lai:2017}, which use similar ITS-QHE constructions.  We very roughly summarize these results and compare them to our own.  

In \cite{Zeng:2007}, Zeng {\em et al.} were some of the first to place restrictions on quantum universal transversal gate sets for {\em additive} quantum codes by elucidating the stabilizer group structure.  Further work in \cite{Anderson:2014} classified the set of diagonal gates that can implement one and two qubit logical operations in stabilizer codes.  Shortly thereafter, \cite{Eastin:2009} showed that for {\em any} QECC, the transversal gate set must be finite, and so cannot approximate with arbitrary precision the full unitary group.  Intuitively, they make a Lie type argument by showing that infinitesimal transversal operations are themselves linear combinations of local error operators.  Since these unitaries must act identically on the codespace, it follows that the group of transversal operations must be finite.

More recently, \cite{Bravyi:2013} placed restrictions on the more general class of topologically protected logical gates in topological stabilizer codes, which include transversal gates as an optimal subset.  They showed that for a topological stabilizer code defined on a $d$-dimensional lattice, any such gate must lie in the $d$th level of the Clifford hierarchy.  These results were extended in \cite{Pastawski:2014} to more general stabilizer subsystem codes, and in Appendix \ref{Appendix3}, we detail how these arguments can be used to rule out classical reversible transversal computation for the subclass of stabilizer codes.

In the direction of ITS-QHE, \cite{Ouyang:2015} gave a compact and efficient ITS-QHE scheme for the restricted class of Clifford circuits.  Using magic state injection, they complete a universal gate set by adding the $T$-gate.  However, because the client and server cannot communicate during the protocol, they must limit themselves to circuits using a constant number of $T$-gates.  Again, their encryption is a ``noisy'' encoding of the data into some code followed by a secret embedding into random noise.  With this encryption, they are able to generate indistinguishable outputs using only polynomial overhead in the input size.  In the more recent work \cite{Lai:2017}, the authors independently observe Proposition~\ref{concurrent}.  They take the more positive approach of arguing what can be done in spite of this limitation, extending the ITS-QHE schematic in \cite{Ouyang:2015} to other {\em particular} error-correcting codes and using code concatenation to achieve security with only polynomial overhead.  This achieves ITS-QHE on the larger circuit class $IQP^+$, which is probably not classically simulable \cite{Bremner:2010}.

Because of the stringent lower bounds placed by Nayak, we actually forgo the noisy encoding circuit and embed QECCs directly into random noise after removing a correctable set of qubits.  This has the effect of increasing the overhead by an exponential factor in order to achieve security, but thanks to the roomy lower bound, this factor is still too small to allow an ITS-QFHE scheme.  

We can argue directly about the security of this scheme using the nonlocality of the quantum information being encoded in almost any QECC.  The idea is conceptually simple: in order to obtain encryptions of the data that are both secure and (sufficiently) short, we must inject randomness into the encodings themselves by withholding qubits from the code.  While ordinarily this would negatively affect the correctness of homomorphic evaluation, the error-correcting property allows us to inject this randomness while still maintaining perfect recoverability.  Then intuitively, spreading the information across the subsystems limits the complexity of the class of logical operators that don't couple the subsystems, i.e. the {\em transversal} operators.  This differs fundamentally from the approaches in $\cite{Zeng:2007}$ and $\cite{Eastin:2009}$ in that it is a quantitative information-type bound.  

It is not without its drawbacks however, as these maximally redundant codes fail to ``spread out'' the information sufficiently.  The prototypical example is Shor's code, which is the concatenation of a bit-flip and phase-flip code.  However, we can argue directly using the stabilizer group structure that no such {\em additive} code can implement Toffoli transversally.

%\setcounter{section}{0}
%!TEX root =  NewmanS16_homo.tex

\newcommand{\X}{\mathcal{X}}
\newcommand{\Y}{\mathcal{Y}}
\newcommand{\St}{\mathcal{S}}
\newcommand{\I}{\mathcal{I}}
\newcommand{\hro}{\hat{\rho}}
\newcommand{\Supp}{\mathrm{Supp}}
\newtheorem{conj}{Conjecture}

\section{Preliminaries}

\subsection{Quantum Information}

We quickly review some standard notation, followed by some less standard tools we will need from quantum information theory.  For a more complete view, see \cite{Nielsen:2011}.  

Throughout, we will be working with $2$-level qubit quantum systems. We denote by $|\mathcal{H}| = \log(\dim(\mathcal{H}))$, the number of qubits constituting state space $\mathcal{H}$.  We define a general quantum state to be a positive semi-definite operator $\rho \in L(\mathcal{H})$ of trace one.  We call such a state pure if rank$(\rho) = 1$, otherwise we call it mixed, and note that such an operator is mixed if and only if  $\Tr(\rho^2) < 1$.  For any operator $U \in \mathcal{H}_A$, we use the notation $U^A$ to indicate the operator $U_A \otimes I_B \in L(\mathcal{H}_A \otimes \mathcal{H_B})$.  When it is unclear which space a state lives in, we will denote its state space as a superscript (e.g. $\ket{\psi}^A$).  We also sometimes adopt the notation that for $\rho \in L(A \otimes B)$, $\rho^A = \Tr_B(\rho)$.  By slight abuse of notation, we also adopt the convention that for any permutation $\pi \in S_n$, $\pi$ can also indicate the unitary permutation operator corresponding to the physical permutation of qubits.  We also sometimes omit the dimension of an identity operator $I$, but usually the dimension is implicitly its trace normalization factor, e.g. $I/D$ acts on a space of dimension $D$.

The norm $\| \cdot \|_p$ refers the usual Schatten $p$-norm, so that for any $A \in L(\mathcal{H})$ with singular values $(a_1, \ldots, a_n)$, 

\[ \| A \|_p = \left(\sum\limits_{i=0}^n a_i^p\right)^{1/p} \text{ for $p > 1$, and } \hspace{.2 cm} \| A \|_1 = \sum\limits_{i=0}^n |a_i|. \] 

Further recall that

\[\frac{1}{2}\|\rho - \sigma\|_1 = \max\limits_{P\leq I} \Tr(P(\rho - \sigma)),\] 

and so we can think of the $1$-norm as a means of bounding the ability to distinguish two quantum states, where $\leq$ refers to the positive semidefinite partial ordering.  For a collection of quantum states $\{\rho_S\}$ indexed by $S \in \mathcal{S}$, we sometimes write $E_S[\rho_S]$ to denote the expectation over a uniformly random choice of $S$, $E_S[\rho_S] = \frac{1}{|\mathcal{S}|}\sum_{S \in \mathcal{S}} \rho_S$.  We will regularly be referring to several particular gates, and so list them here.

\[ X = \left( \begin{array}{cc}
0 & 1 \\
1 & 0
\end{array} \right) \hspace{2cm}
Y = \left( \begin{array}{cc}
0 & -i \\
i & 0
\end{array} \right) \hspace{2cm}
Z = \left( \begin{array}{cc}
1 & 0 \\
0 & -1
\end{array} \right)
\]

\[
CX = \left( \begin{array}{cccc}
1 & 0 & 0 & 0 \\
0 & 1 & 0 & 0 \\
0 & 0 & 0 & 1 \\
0 & 0 & 1 & 0
\end{array} \right) \hspace{2cm}
CZ = \left( \begin{array}{cccc}
1 & 0 & 0 & 0 \\
0 & 1 & 0 & 0 \\
0 & 0 & 1 & 0 \\
0 & 0 & 0 & -1
\end{array} \right)
\]

\[
\text{Toff} = \left( \begin{array}{cccccccc}
1 & 0 & 0 & 0 & 0 & 0 & 0 & 0 \\
0 & 1 & 0 & 0 & 0 & 0 & 0 & 0 \\
0 & 0 & 1 & 0 & 0 & 0 & 0 & 0 \\
0 & 0 & 0 & 1 & 0 & 0 & 0 & 0 \\
0 & 0 & 0 & 0 & 1 & 0 & 0 & 0 \\
0 & 0 & 0 & 0 & 0 & 1 & 0 & 0 \\
0 & 0 & 0 & 0 & 0 & 0 & 0 & 1 \\
0 & 0 & 0 & 0 & 0 & 0 & 1 & 0
\end{array} \right) \hspace{2cm}
CCZ = \left( \begin{array}{cccccccc}
1 & 0 & 0 & 0 & 0 & 0 & 0 & 0 \\
0 & 1 & 0 & 0 & 0 & 0 & 0 & 0 \\
0 & 0 & 1 & 0 & 0 & 0 & 0 & 0 \\
0 & 0 & 0 & 1 & 0 & 0 & 0 & 0 \\
0 & 0 & 0 & 0 & 1 & 0 & 0 & 0 \\
0 & 0 & 0 & 0 & 0 & 1 & 0 & 0 \\
0 & 0 & 0 & 0 & 0 & 0 & 1 & 0 \\
0 & 0 & 0 & 0 & 0 & 0 & 0 & -1
\end{array} \right)
\]

\begin{Definition}
\normalfont
An $(n,m,p)$-{\em quantum random access code} (QRAC) is a mapping from an $n$-bit string $x$ to an $m$-qubit quantum state $\rho_x$ along with a family of measurements $\{M_i^0,M_i^1\}_{i=1}^n$ satisfying, for all $x \in \{0,1\}^n$, $i \in [n]$,
\[\Tr(M_i^j\rho_x) \geq p \text{ if } x_i = j.\]
More generally, we can consider some protocol $M_i$ for retrieving $x_i$ and we call this protocol the $i$th query of the QRAC, satisfying $\Pr[M_i(\rho_x) = x_i] \geq p$.
\end{Definition}

\begin{Definition}\label{QECC}
\normalfont
An $n$-qubit {\em quantum code} is simply a subspace $C$ of an $n$-body Hilbert space $\mathcal{H}$ along with a fiducial orthonormal logical basis.  Let $P_C$ denote the projection onto $C$. The code is further called an $[[n,k,d]]$ {\em quantum error-correcting code} (QECC) if it is a subspace $C$ of dimension $2^k$ satisfying, for any $\lfloor\frac{d-1}{2}\rfloor$-local operators $E_a,E_b$, 
\[P_CE^\dag_aE_bP_C= \lambda_{ab}P_C.\]  
for some Hermitian $(\lambda_{ab})$.  We call the $E_a,E_b$ {\em correctable errors} and we say $d$ is the {\em distance} of the code.  

Operationally, this means that there exists a recovery channel $\mathcal{R}$ by which any $\lfloor\frac{d-1}{2}\rfloor$-local error (acting nontrivially on at most $\lfloor\frac{d-1}{2}\rfloor$ qubits) can be corrected.  Recall also that any $[[n,k,d]]$ QECC can correct up to $(d-1)$ errors in known locations.  We call these types of errors {\em erasure errors}, and so call codes satisfying $d\geq2$ {\em erasure-correcting codes}.
\end{Definition}

For simplicity, we will restrict our discussion to QECCs encoding a single qubit, i.e. $[[n,1,d]]$ QECCs.  A quick review of the proof shows that we can make this assumption without loss of generality by arguing against a classical universal transversal gate set on any single encoded logical qubit.  By similar reasoning, the proof applies to subsystem codes as well, where we require the logical operators of such a code to be independent of the state of the gauge qubits.

When we have a collection of $p$ logical qubits, {\em each} encoded into an $n$-qubit code, we can decompose the $np$ physical qubits into a fixed $n$-wise partition of $p$-qubits each so that every partitioning set contains exactly one qubit from each code block.  We refer to these partitioning sets as the subsystems of the collective code.  

\begin{Definition}
\normalfont
An $n$-qubit stabilizer group $\mathcal{S}$ is an abelian subgroup of the $n$-qubit Pauli group not containing $-I$.  An $n$-qubit {\em additive} (or stabilizer) {\em code} encoding $k$ logical qubits can be described as the simultaneous $(+1)$-eigenspace of the Pauli operators comprising an $n$-qubit stabilizer group $S$ with $n-k$ generators.  The logical Pauli operators of this code correspond to the normalizer cosets $\mathcal{N}(\mathcal{S})/\mathcal{S}$, and it follows that the distance of the code is the minimal weight operator in $\mathcal{N}(\mathcal{S})/\mathcal{S}$.
\end{Definition}

\begin{Definition}
\normalfont
For any quantum error correcting code $C$, we define its logical states $\ket{\psi}_L$ to be the physical encoding of $\ket{\psi}$ in $C$.  We define the logical gate $U_L$ to be a codespace preserving physical gate that satisfies, for all $\ket{\psi}_L \in C$, $U_L \ket{\psi}_L = (U\ket{\psi})_L$.  

The set of {\em transversal gates} $\mathcal{T}_C$ associated to $C$ are those logical gates that decompose as a product across the subsystems.  That is to say, $U_L \in \mathcal{T}_C$ if $U_L = U_1 \otimes \ldots \otimes U_n$, where $n$ is the length of the code, each $U_i$ acts on a single subsystem, and $U_L$ is a codespace preserving map on the code $C^{\otimes r}$ for $U$ an $r$-qubit gate.  We further define a logical gate to be {\em strongly transversal} if it decomposes as $U_L = U^{\otimes n}$.  Following the example of $\cite{Zeng:2007}$, we do not allow coordinate permutations in our definition of transversality.
\end{Definition}

\begin{Definition} \label{Definition8}
\normalfont
We say a quantum code $C =$ Span$_\mathbb{C}(\ket{\tilde{0}}, \ket{\tilde{1}})$ is an {\em $r$-fold code} if it can be written as 

\[\ket{i}_L = \bigotimes\limits_{j=1}^r\ket{\psi_{ij}}.\]

where each vector $\ket{\psi_{ij}}$ does not further decompose as a product state across any bipartition.  We additionally assume that $r \leq d$, that $ \ket{\psi_{0j}}$ and $\ket{\psi_{1j}}$ occupy the same subsystem, and that $\ket{\psi_{0j}} \perp \ket{\psi_{1j}}$.  It then makes sense to refer to Span$\{\ket{\psi_{0j}},\ket{\psi_{1j}}\}$ as the {\em $j$th subcode}. These assumptions are natural, and we justify them in our discussion.  

If the code is additionally an $[[n,1,d]]$ QECC with $r = d \geq 3$ and each subcode has distance $1$, we simply call the resulting code a {\em maximally redundant code}.  Note that any (pure state) code is at least a $1$-fold code.
\end{Definition}

The guiding example is Shor's code, which can be seen as the concatenation of a repetition outer code and a complementary $GHZ$ inner code, neither of which is quantum erasure correcting.  In the case that the subcodes are identical, any maximally redundant code is just the concatenation of a repetition code with some distance $1$ subcode.  Intuitively, these are codes for which you can't erase enough qubits to mix the state while still remaining perfectly correctable: while redundancy can be used in classical error-correcting codes to protect information, quantum error-correcting codes must ``spread out'' information to protect it.  In this sense, these codes are maximally redundant because they ``spread out'' the information the least.  We show that non-additive maximally redundant codes (i.e. maximally redundant codes for which the subcodes are comprised of non-stabilizer subspaces) are the only binary QECCs with the hope of implementing logical Toffoli transversally. 

\subsection{Homomorphic Encryption}

We define an ITS-QHE scheme as three algorithms performed between two parties which we will call Client and Server.  We restrict ourselves to the more limited setting of a quantum scheme implementing Boolean functions on classical data using quantum encodings.  Of course, any impossibility result then extends to the more difficult task of quantum computations on quantum inputs.  

The parameters of such a scheme are given by $(n,m,m',\epsilon,\epsilon')$ and some gate set $\mathcal{F}$.  Formally, we define the algorithms of an ITS-QHE scheme as acting on Client's private workspace $\mathcal{C}$, a message space $\mathcal{M}$ sent from Client to Server after encryption, and a message space $\mathcal{M}'$ sent from Server to Client after evaluation.

\begin{itemize}
\item[$(i)$] $QHE.Enc(x) = \rho^{\mathcal{CM}}$, in which the client chooses an $n$-bit input $x$ and encrypts with some private randomness to obtain $\rho$.  We assume that any quantum evaluation key is appended to the encryption. Client then sends the message portion of the encryption $\rho^{\mathcal{M}}$ to the Server.  We define $m$ to be the length of this message, the size of the encoding. \\

\item[$(ii)$] $QHE.Eval_f: L(\mathcal{M}) \longrightarrow L(\mathcal{M'})$, in which Server, with description of some circuit $f$, applies an evaluation map to an encrypted state, possibly consuming an evaluation key in the process.  Server then sends his portion of the state $\sigma^{\mathcal{CM}'}\coloneqq (I^\mathcal{C} \otimes QHE.Eval_f) (\rho^{\mathcal{CM}})$ back to Client, and we define the length of this message to be $m'$, the size of the evaluated encoding. \\

\item[$(iii)$] $QHE.Dec(\sigma^\mathcal{CM'}) = y$, in which Client decrypts the returned evaluated encoding using her side information $\mathcal{C}$ and recovers some associated plaintext $y$.  

\end{itemize}

\noindent As an encryption scheme, the above should certainly satisfy
\[QHE.Dec(QHE.Enc(x)) = x.\]
and as an ITS-QHE scheme, there are three additional properties the scheme should satisfy as well.

\begin{itemize}

\item[$(i)$] \emph{ $\epsilon$-Information-theoretic security}: for any inputs $x,y$, letting $\rho_x, \rho_y \in L(\mathcal{M})$ denote the outputs of $QHE.Enc$ on $\mathcal{M}$ (thinking of these states as mixtures of encryptions under uniformly random choices of secret key),
\[\|\rho_x - \rho_y\|_1 \leq \epsilon.\]

\item[$(ii)$] \emph{$\mathcal{F}$-homomorphic}: for any circuit $f \in \mathcal{F}$ and for any input $x$,
\[\Pr[QHE.Dec(QHE.Eval_f(QHE.Enc(x)))_1 \neq f(x)] \leq \epsilon' \]
where the probability is over the randomness of the protocol and the subscript $1$ denotes the first bit of the output.  This restriction to the first bit is just to argue directly about Boolean functions.  For ease of exposition, we also assume without loss of generality that our protocols are perfectly correct, and allow $\epsilon' = 0$.  We call the scheme fully homomorphic if it is homomorphic on the set of all classical Boolean circuits.\\

\item[$(iii)$] \emph{Compactness}: a priori, the server could do nothing except append a description of the circuit $f$ to be run by the decryption function after decrypting.  To avoid trivial solutions like this, we demand that the total time-complexity of Client's actions in the protocol do not scale with the complexity of the functions to be evaluated, but only with some fixed function on the size of the input.  Intuitively, this captures the motivation behind homomorphic encryption: limiting the computational cost to Client.  However, we note that the standard definition of compactness refers to the time-complexity of the decrypt function specifically.
\end{itemize}

We denote a scheme homomorphic for some class of functions $\mathcal{F}$ and satisfying all of these properties as an $\mathcal{F}$-ITS-QHE scheme.  If $\mathcal{F}$ is the set of all Boolean circuits, we denote such a scheme as an ITS-QFHE scheme.  We observe that such a scheme must be inefficient.  For a precise statement and proof, see Appendix \ref{AppendixITSQHE}.

\begin{Proposition}
The communication cost of ITS-QFHE must be exponential in the size of the input.
\end{Proposition}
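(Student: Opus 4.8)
The plan is to view an ITS-QFHE scheme on classical inputs with quantum ciphertexts as a quantum random access code (QRAC) for a carefully chosen function class, and then invoke Nayak's lower bound. Concretely, fix an input length $n$ and consider evaluating the full family of Boolean functions $f\colon\{0,1\}^n\to\{0,1\}$ homomorphically. The encryption $\rho_x^{\mathcal{M}}$ of an input $x$ is an $m$-qubit state, and the homomorphism property guarantees that for each such $f$ there is a server evaluation map $QHE.Eval_f$ followed by $QHE.Dec$ (with the client's side information $\mathcal{C}$) that outputs $f(x)$ with certainty (we are allowed $\epsilon'=0$). The key move is to reverse the roles: treat the string to be encoded as (a description of) the function $f$, and the retrieval queries as ``evaluate at $x$''. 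That is, for each $x$ one wants a measurement on some fixed-size quantum state that returns $f(x)$. The subtlety is that $QHE.Eval_f$ acts only on $\mathcal{M}$ while $QHE.Dec$ needs $\mathcal{C}$; but compactness says the client's workspace and actions do not grow with the complexity of $f$, so $|\mathcal{C}|$ is bounded by a fixed function of $n$, and the whole decoding side information has size polynomial in $n$. One packages $\mathcal{C}$ together with a suitable fixed reference/ancilla into the ``encoding'' of the QRAC whose message is the joint state, so that the retrieval of $f(x)$ is: apply $QHE.Eval_f^{\mathcal{M}}$, then measure via $QHE.Dec$.

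With this reduction in hand, the structure of the argument is: (1) by $\epsilon$-ITS security, the marginal $\rho_x^{\mathcal{M}}$ is (up to $\epsilon$ in trace distance) independent of $x$, so essentially all of the information that distinguishes which function is being computed must reside in the client-side register $\mathcal{C}$ — but this is exactly the register whose size is controlled by compactness. Actually the cleaner packaging is the other direction: fix the input string to be $f$ itself, so the QRAC encodes an $N$-bit string (a truth table, $N = 2^n$), the ``message'' of the QRAC is whatever quantum state the honest protocol produces that allows recovery of any desired $f(x)$, and its size is the total size of the registers the client manipulates, which compactness forces to be $\mathrm{poly}(n) = \mathrm{polylog}(N)$. (2) The correctness of homomorphic evaluation says each query $x$ recovers $f(x)$ with probability $1$ (or $\ge 1-\epsilon'$), so this is an $(N, \mathrm{polylog}(N), 1-\delta)$-QRAC for $\delta$ negligible. (3) Nayak's bound states that any $(N,m,p)$-QRAC with $p>1/2$ requires $m \ge (1-H(p))N$, which for $p$ bounded away from $1/2$ is $m = \Omega(N) = \Omega(2^n)$, contradicting the $\mathrm{polylog}(N)$ bound. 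Hence the communication cost $m$ (or $m'$) of the scheme must be exponential in $n$.

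I expect the main obstacle to be step (1): making precise exactly which registers of the ITS-QHE scheme get bundled into the ``encoding'' of the QRAC so that (a) its size is genuinely controlled by a fixed function of the input length — here one must be careful that compactness as stated bounds the client's \emph{time complexity}, from which one must deduce a bound on the size of the quantum workspace and on $m$ and $m'$ that the protocol actually uses — and (b) the retrieval measurement for query $x$ is a legitimate measurement, i.e. $QHE.Eval_f$ can be absorbed into the server's register without needing fresh randomness correlated with the encoding. One also has to handle the imperfect-security parameter $\epsilon$ and the correctness parameter $\epsilon'$: a standard triangle-inequality argument shows that $\epsilon$-security only perturbs the QRAC success probability by $O(\epsilon)$, so as long as $\epsilon$ is a small enough constant (or negligible, as in our constructions) the QRAC still has success probability bounded away from $1/2$ and Nayak's bound still bites. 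Everything else — the statement of Nayak's bound and the binary-entropy arithmetic turning $m \ge (1-H(p))N$ into an exponential lower bound — is routine, and I would relegate the careful bookkeeping of registers to the appendix.
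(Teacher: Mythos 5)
Your high-level plan---cast ITS-QFHE as a QRAC of the truth table of $f$ and invoke Nayak's bound---is exactly the approach the paper takes, and Nayak's theorem is the right endgame. But there is a genuine gap in the middle of your argument, and it sits precisely where you flag uncertainty. You propose that the QRAC ``encodes $f$ itself'' and that the query for $x$ is ``apply $QHE.Eval_f^{\mathcal{M}}$, then measure via $QHE.Dec$.'' This cannot be right as stated: if $f$ is the encoded data, the decoder only knows $x$, so it cannot apply $QHE.Eval_f$. And if instead the encoding is a raw encryption of $x$ and the query applies $QHE.Eval_f$, you have a QRAC of an $n$-bit string queried by $f$, which gives only $m+m' \gtrsim n$ and proves nothing. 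The missing idea is how a fixed state produced by the protocol on a \emph{single} plaintext lets the decoder retrieve $f(x)$ for \emph{every} $x$. The paper supplies this via an Uhlmann-style step: writing $s_x = QHE.Enc(x)$ as a purification on $\mathcal{CM}$, $\epsilon$-ITS security says the marginals on $\mathcal{M}$ are $\epsilon$-close, hence (by Uhlmann/Fuchs--van de Graaf) there are \emph{client-side} unitaries $V_x^{\mathcal{C}}$ with $(V_x^{\mathcal{C}}\otimes I)\,s_{0^n}\,(V_x^{\mathcal{C}}\otimes I)^\dagger$ close to $s_x$. Because $V_x$ acts only on $\mathcal{C}$ it commutes with the server's $f_{ev}$ on $\mathcal{M}$, so the QRAC encoding of $f$ is $\eta_{f,0^n}$ (encrypt $0^n$, evaluate $f$), and the query for $x$ is: apply $V_x^{\mathcal{C}}$, then decrypt. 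Correctness plus contractivity of trace distance gives success probability $\geq 1-\epsilon$.

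A secondary issue: you lean on compactness to bound the QRAC message size and worry about deducing space bounds from time bounds, but none of that is needed. The paper simply purifies $\rho_x^{\mathcal{M}}$, which lets one take $|\mathcal{C}| \leq m$ without loss of generality; the QRAC message is then $\eta_{f,0^n}$ on $\mathcal{CM}'$ of size at most $m+m'$, which \emph{is} the communication cost the proposition bounds. Nayak then directly yields $m+m' \geq 2^n\bigl(1-H(\epsilon)\bigr)$, i.e.\ exponential communication, with no appeal to compactness or to ``polylog'' workspace bounds at all. So: same skeleton as the paper, but you are missing the $V_x^{\mathcal{C}}$ construction that makes the query well-defined, and you are taking an unnecessary and shakier route (via compactness) to bound the encoding size.
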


\section{A coding based ITS-QHE scheme}

We now consider a strategy for implementing compact QHE using quantum codes.  This will be a simple ``block'' embedding encryption scheme homomorphically implementing \emph{quantum} circuits on \emph{classical} input, and is similar to the construction in \cite{Ouyang:2015}.  We will use the error-correcting property to withhold a correctable set of qubits from the encoding.

\begin{figure}
\center{\setlength{\fboxsep}{10pt}
\fbox{\parbox{5.2in}{
\textbf{Coding QHE Scheme:}
\vskip0.1in
\textit{Arguments:} 
\[
\begin{array}{rcl}
C & = & \textnormal{an $[[n+r,1,d]]$ $r$-fold QECC with $r < d$} \\
\vec{i} & \in & \{0,1\}^p \\
m & = & \textnormal{the size of each noise code block} \\
S & \in & [m]^n \textnormal{, the secret key}
\end{array}
\]

\begin{enumerate}
\item On input $\vec{i} \in \{0,1\}^p$, encode $\vec{i}$ as the pure state $\bigotimes\limits_{\ell = 1}^p \ket{i_\ell}_L$, for $\{\ket{0}_L, \ket{1}_L\}$ the logical computational basis defining $C$.
\vskip0.1in
\item Let $R$ be a collection of $r$ subsystems, each of $p$-qubits, comprised of one subsystem from each subcode.  Then form $\gamma^{\vec{i}} = \Tr_R(\bigotimes_\ell \ket{i_\ell}_L)$.  Essentially, $\gamma^{\vec{i}}$ is the state of the collection of codewords with each codeword missing one subsystem from each of its subcodes.
\vskip0.1in
\item Initialize $n$ ($p \times m$) arrays of maximally mixed qubits, and replace the $S_j$-th column of each array with the $j$-th subsystem of $\gamma^{\vec{i}}$.  This forms the encrypted state.
\vskip0.1in
\item Publish a constant number of labeled encryptions of $0$ and $1$, to be used as ancilla in homomorphic evaluation.
\end{enumerate}
}}
\caption{A description of the encryption procedure for the code based QHE scheme.}
\label{qhe}}
\end{figure}

  The scheme is detailed in Figures \ref{qhe} and \ref{pic}.  Using that notation to summarize, our encryption channel $\mathcal{E}$ is defined, for secret key $S$ and input string $\vec{i}$, as $\mathcal{E}(S, \vec{i}) = \gamma_{S}^{\vec{i}}$.  We sometimes use the notation $\gamma_S$ instead of $\gamma_S^{\vec{i}}$ or $\gamma$ instead of $\gamma^{\vec{i}}$, omitting $\vec{i}$ when we are unconcerned with the underlying plaintext.
\begin{figure}
\center{\psscalebox{1.0 1.0} % Change this value to rescale the drawing.
% \usepackage[usenames,dvipsnames]{pstricks}
% \usepackage{epsfig}
% \usepackage{pst-grad} % For gradients
% \usepackage{pst-plot} % For axes
% \usepackage[space]{grffile} % For spaces in paths
% \usepackage{etoolbox} % For spaces in paths
% \makeatletter % For spaces in paths
% \patchcmd\Gread@eps{\@inputcheck#1 }{\@inputcheck"#1"\relax}{}{}
% \makeatother
% \psscalebox{1.0 1.0} % Change this value to rescale the drawing.
{
\begin{pspicture}(0,-4.155)(10.69,4.155)
\rput[bl](0.21,3.065){$x_1$}
\rput[bl](0.21,2.265){$x_2$}
\rput[bl](0.21,0.665){$x_p$}
\psdots[linecolor=black, dotsize=0.0](0.61,1.865)
\psdots[linecolor=black, dotsize=0.0](1.81,1.865)
\psdots[linecolor=black, dotsize=0.0](2.21,1.865)
\psdots[linecolor=black, dotsize=0.0](2.61,1.865)
\psline[linecolor=black, linewidth=0.02](1.01,2.065)(2.61,2.065)(2.61,2.065)
\psline[linecolor=black, linewidth=0.02](2.61,2.065)(2.51,2.165)
\psline[linecolor=black, linewidth=0.02](2.61,2.065)(2.51,1.965)
\psline[linecolor=black, linewidth=0.04, linestyle=dotted, dotsep=0.10583334cm](0.41,1.865)(0.41,1.265)
\psdots[linecolor=black, dotsize=0.2](3.41,3.185)
\psdots[linecolor=black, dotsize=0.2](4.21,3.185)
\psdots[linecolor=black, dotsize=0.2](6.21,3.185)
\psdots[linecolor=black, dotsize=0.2](3.41,2.385)
\psdots[linecolor=black, dotsize=0.2](4.21,2.385)
\psdots[linecolor=black, dotsize=0.2](6.21,2.385)
\psdots[linecolor=black, dotsize=0.2](3.41,0.785)
\psdots[linecolor=black, dotsize=0.2](4.21,0.785)
\psdots[linecolor=black, dotsize=0.2](6.21,0.785)
\psdots[linecolor=black, dotsize=0.2](7.01,3.185)
\psdots[linecolor=black, dotsize=0.2](7.01,2.385)
\psdots[linecolor=black, dotsize=0.2](7.01,0.785)
\psline[linecolor=black, linewidth=0.04, linestyle=dotted, dotsep=0.10583334cm](4.61,3.185)(5.81,3.185)
\psline[linecolor=black, linewidth=0.04, linestyle=dotted, dotsep=0.10583334cm](4.61,2.385)(5.81,2.385)
\psline[linecolor=black, linewidth=0.04, linestyle=dotted, dotsep=0.10583334cm](4.61,0.785)(5.81,0.785)
\psline[linecolor=black, linewidth=0.04, linestyle=dotted, dotsep=0.10583334cm](3.41,1.985)(3.41,1.185)(3.41,1.185)
\psline[linecolor=black, linewidth=0.04, linestyle=dotted, dotsep=0.10583334cm](4.21,1.985)(4.21,1.185)
\psline[linecolor=black, linewidth=0.04, linestyle=dotted, dotsep=0.10583334cm](6.21,1.985)(6.21,1.185)
\psline[linecolor=black, linewidth=0.04, linestyle=dotted, dotsep=0.10583334cm](7.01,1.985)(7.01,1.185)
\psline[linecolor=black, linewidth=0.04, linestyle=dotted, dotsep=0.10583334cm](4.61,1.985)(5.81,1.185)
\rput[bl](7.41,3.065){$=|x_1\rangle_L$}
\rput[bl](7.41,2.265){$=|x_2\rangle_L$}
\rput[bl](7.41,0.665){$=|x_p\rangle_L$}
\psline[linecolor=black, linewidth=0.04, linestyle=dotted, dotsep=0.10583334cm](7.91,1.985)(7.91,1.185)
\psline[linecolor=black, linewidth=0.02](3.01,3.465)(3.01,3.665)(5.01,3.665)(5.21,3.865)(5.41,3.665)(7.21,3.665)(7.21,3.465)
\rput[bl](4.81,3.905){$n+1$}
\psline[linecolor=black, linewidth=0.02](9.01,2.065)(10.61,2.065)(10.61,2.065)
\psline[linecolor=black, linewidth=0.02, linestyle=dashed, dash=0.17638889cm 0.10583334cm](10.61,2.065)(10.51,2.165)
\psline[linecolor=black, linewidth=0.02, linestyle=dashed, dash=0.17638889cm 0.10583334cm](10.61,2.065)(10.51,1.965)
\rput[bl](1.11,2.165){\small{Encoding}}
\rput[bl](9.11,2.165){\small{Encryption}}
\psline[linecolor=black, linewidth=0.02, linestyle=dashed, dash=0.17638889cm 0.10583334cm](6.61,3.565)(6.61,0.565)
\psdots[linecolor=black, fillstyle=solid, dotstyle=o, dotsize=0.2, fillcolor=white](0.41,-0.535)
\psdots[linecolor=black, dotsize=0.2](1.41,-0.535)
\psdots[linecolor=black, fillstyle=solid, dotstyle=o, dotsize=0.2, fillcolor=white](2.41,-0.535)
\psdots[linecolor=black, fillstyle=solid, dotstyle=o, dotsize=0.2, fillcolor=white](3.81,-0.535)
\psdots[linecolor=black, dotsize=0.2](4.81,-0.535)
\psdots[linecolor=black, fillstyle=solid, dotstyle=o, dotsize=0.2, fillcolor=white](5.81,-0.535)
\psdots[linecolor=black, fillstyle=solid, dotstyle=o, dotsize=0.2, fillcolor=white](7.97,-0.535)
\psdots[linecolor=black, fillstyle=solid, dotstyle=o, dotsize=0.2, fillcolor=white](0.41,-1.335)
\psdots[linecolor=black, dotsize=0.2](1.41,-1.335)
\psdots[linecolor=black, fillstyle=solid, dotstyle=o, dotsize=0.2, fillcolor=white](2.41,-1.335)
\psdots[linecolor=black, fillstyle=solid, dotstyle=o, dotsize=0.2, fillcolor=white](3.81,-1.335)
\psdots[linecolor=black, dotsize=0.2](4.81,-1.335)
\psdots[linecolor=black, fillstyle=solid, dotstyle=o, dotsize=0.2, fillcolor=white](5.81,-1.335)
\psdots[linecolor=black, fillstyle=solid, dotstyle=o, dotsize=0.2, fillcolor=white](7.97,-1.335)
\psdots[linecolor=black, fillstyle=solid, dotstyle=o, dotsize=0.2, fillcolor=white](0.41,-2.935)
\psdots[linecolor=black, dotsize=0.2](1.41,-2.935)
\psdots[linecolor=black, fillstyle=solid, dotstyle=o, dotsize=0.2, fillcolor=white](2.41,-2.935)
\psdots[linecolor=black, fillstyle=solid, dotstyle=o, dotsize=0.2, fillcolor=white](3.81,-2.935)
\psdots[linecolor=black, dotsize=0.2](4.81,-2.935)
\psdots[linecolor=black, fillstyle=solid, dotstyle=o, dotsize=0.2, fillcolor=white](5.81,-2.935)
\psdots[linecolor=black, fillstyle=solid, dotstyle=o, dotsize=0.2, fillcolor=white](7.97,-2.935)
\psdots[linecolor=black, dotsize=0.2](8.97,-0.535)
\psdots[linecolor=black, dotsize=0.2](8.97,-1.335)
\psdots[linecolor=black, dotsize=0.2](8.97,-2.935)
\psdots[linecolor=black, fillstyle=solid, dotstyle=o, dotsize=0.2, fillcolor=white](9.97,-0.535)
\psdots[linecolor=black, fillstyle=solid, dotstyle=o, dotsize=0.2, fillcolor=white](9.97,-1.335)
\psdots[linecolor=black, fillstyle=solid, dotstyle=o, dotsize=0.2, fillcolor=white](9.97,-2.935)
\rput[bl](8.77,-0.335){$S_n$}
\rput[bl](1.21,-0.335){$S_1$}
\rput[bl](4.64,-0.335){$S_2$}
\psline[linecolor=black, linewidth=0.04, linestyle=dotted, dotsep=0.10583334cm](0.41,-1.735)(0.41,-2.535)
\psline[linecolor=black, linewidth=0.04, linestyle=dotted, dotsep=0.10583334cm](1.41,-1.735)(1.41,-2.535)
\psline[linecolor=black, linewidth=0.04, linestyle=dotted, dotsep=0.10583334cm](2.41,-1.735)(2.41,-2.535)
\psline[linecolor=black, linewidth=0.04, linestyle=dotted, dotsep=0.10583334cm](3.81,-1.735)(3.81,-2.535)
\psline[linecolor=black, linewidth=0.04, linestyle=dotted, dotsep=0.10583334cm](4.81,-1.735)(4.81,-2.535)
\psline[linecolor=black, linewidth=0.04, linestyle=dotted, dotsep=0.10583334cm](5.81,-1.735)(5.81,-2.535)
\psline[linecolor=black, linewidth=0.04, linestyle=dotted, dotsep=0.10583334cm](7.97,-1.735)(7.97,-2.535)
\psline[linecolor=black, linewidth=0.04, linestyle=dotted, dotsep=0.10583334cm](8.97,-1.735)(8.97,-2.535)
\psline[linecolor=black, linewidth=0.04, linestyle=dotted, dotsep=0.10583334cm](9.97,-1.735)(9.97,-2.535)
\psline[linecolor=black, linewidth=0.04, linestyle=dotted, dotsep=0.10583334cm](6.21,-1.735)(7.61,-1.735)(7.61,-1.735)
\psline[linecolor=black, linewidth=0.04, linestyle=dotted, dotsep=0.10583334cm](0.71,-0.535)(1.11,-0.535)
\psline[linecolor=black, linewidth=0.04, linestyle=dotted, dotsep=0.10583334cm](1.71,-0.535)(2.01,-0.535)
\psline[linecolor=black, linewidth=0.04, linestyle=dotted, dotsep=0.10583334cm](4.11,-0.535)(4.51,-0.535)
\psline[linecolor=black, linewidth=0.04, linestyle=dotted, dotsep=0.10583334cm](5.11,-0.535)(5.51,-0.535)
\psline[linecolor=black, linewidth=0.04, linestyle=dotted, dotsep=0.10583334cm](4.11,-1.335)(4.51,-1.335)
\psline[linecolor=black, linewidth=0.04, linestyle=dotted, dotsep=0.10583334cm](5.11,-1.335)(5.51,-1.335)
\psline[linecolor=black, linewidth=0.04, linestyle=dotted, dotsep=0.10583334cm](0.71,-1.335)(1.11,-1.335)
\psline[linecolor=black, linewidth=0.04, linestyle=dotted, dotsep=0.10583334cm](1.71,-1.335)(2.11,-1.335)
\psline[linecolor=black, linewidth=0.04, linestyle=dotted, dotsep=0.10583334cm](0.71,-2.935)(1.11,-2.935)
\psline[linecolor=black, linewidth=0.04, linestyle=dotted, dotsep=0.10583334cm](1.71,-2.935)(2.01,-2.935)
\psline[linecolor=black, linewidth=0.04, linestyle=dotted, dotsep=0.10583334cm](4.11,-2.935)(4.41,-2.935)
\psline[linecolor=black, linewidth=0.04, linestyle=dotted, dotsep=0.10583334cm](5.11,-2.935)(5.41,-2.935)
\psline[linecolor=black, linewidth=0.04, linestyle=dotted, dotsep=0.10583334cm](8.31,-0.535)(8.61,-0.535)
\psline[linecolor=black, linewidth=0.04, linestyle=dotted, dotsep=0.10583334cm](9.31,-0.535)(9.71,-0.535)
\psline[linecolor=black, linewidth=0.04, linestyle=dotted, dotsep=0.10583334cm](8.31,-1.335)(8.61,-1.335)
\psline[linecolor=black, linewidth=0.04, linestyle=dotted, dotsep=0.10583334cm](9.31,-1.335)(9.61,-1.335)
\psline[linecolor=black, linewidth=0.04, linestyle=dotted, dotsep=0.10583334cm](8.31,-2.935)(8.71,-2.935)
\psline[linecolor=black, linewidth=0.04, linestyle=dotted, dotsep=0.10583334cm](9.31,-2.935)(9.61,-2.935)
\psline[linecolor=black, linewidth=0.02](0.21,-3.135)(0.21,-3.235)(1.31,-3.235)(1.41,-3.335)(1.51,-3.235)(2.61,-3.235)(2.61,-3.135)
\rput[bl](1.27,-3.535){$m$}
\psline[linecolor=black, linewidth=0.02](0.01,-3.535)(0.01,-3.735)(5.21,-3.735)(5.41,-3.935)(5.61,-3.735)(10.21,-3.735)(10.21,-3.535)
\rput[bl](5.33,-4.155){$n$}
\psline[linecolor=black, linewidth=0.02](3.27,0.625)(1.61,-0.395)
\psline[linecolor=black, linewidth=0.02](1.6,-0.405)(1.72,-0.405)
\psline[linecolor=black, linewidth=0.02](1.605,-0.41)(1.605,-0.29)
\psline[linecolor=black, linewidth=0.02](4.245,0.595)(4.65,-0.395)
\psline[linecolor=black, linewidth=0.02](4.65,-0.41)(4.695,-0.32)
\psline[linecolor=black, linewidth=0.02](4.665,-0.41)(4.56,-0.365)
\psline[linecolor=black, linewidth=0.02](6.345,0.655)(8.76,-0.44)
\psline[linecolor=black, linewidth=0.02](8.76,-0.44)(8.715,-0.35)
\psline[linecolor=black, linewidth=0.02](8.768,-0.435)(8.69,-0.475)
\end{pspicture}
}
}
\caption{A diagram illustrating the code-based QHE scheme for an $(n+1)$-length 1-fold quantum code while withholding a single subsystem. The $(n+1)$-th subsystem remains in the hands of Client.  The arrows connecting the subsystems indicate where each subsystem (i.e. column) is being mapped.  The filled dots represent code qubits, while the empty dots represent maximally mixed qubits.}
\label{pic}
\end{figure}
The total size of our encrypted input is $mnp$ qubits.  In our preceding notation, the described scheme has parameters $(p, mnp, mnp, \epsilon(m,p),0)$, implementing the set of gates $\mathcal{T}_C$ homomorphically.

\begin{Lemma}
Let $\mathcal{E}$ be the encryption scheme detailed in Figure \ref{qhe}.  Let $\mathcal{T}_C$ denote the group of transversal operators associated to the underlying quantum code $C$.  Then, $\mathcal{E}$ is $\mathcal{T}_C$-homomorphic.
\end{Lemma}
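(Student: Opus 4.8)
The plan is to describe the evaluation map $QHE.Eval_U$ for a single $U\in\mathcal{T}_C$ and verify that decryption of the evaluated ciphertext returns the correct logical output; the general $\mathcal{T}_C$-homomorphic claim then follows by composing gates and carrying the published encryptions of $0$ and $1$ along as extra logical wires under the same secret key. Fix $U\in\mathcal{T}_C$ and write the associated logical operator as a product across the $n+r$ subsystems of $C$, $U_L=\bigotimes_{j=1}^{n+r}V_j$, where each $V_j$ acts only within the $j$-th subsystem and nontrivially only on the rows (code blocks) that $U$ touches. Since the withheld set $R$ consists of $r<d$ subsystems it is erasure-correctable, so at encryption time Client may simply discard it (retaining it as side information also works, with a minor known correction at the end). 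During evaluation the Server, who holds the $n$ noise arrays but not the secret key $S$, applies $V_j$ to \emph{every} column of the $j$-th array, for each $j\in[n]$.

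First I would check that the noise columns are unaffected: a noise column is a tensor product of maximally mixed qubits, and applying the unitary $V_j$ to any subset of them, tensored with the untouched rows, leaves the column maximally mixed by unitary invariance of $I/D$. Hence the Server's blind action on the $j$-th array has net effect only on column $S_j$ --- namely $V_j$ applied to the $j$-th subsystem of $\gamma^{\vec{i}}$ --- so across all $j\in[n]$ the code content of the message is transformed by $\bigotimes_{j\in[n]}V_j$. Next, letting $\Psi$ be the density matrix of the logical input $\bigotimes_\ell\ket{i_\ell}_L$, so that $\gamma^{\vec{i}}=\Tr_R(\Psi)$, I would observe
\[
\Tr_R\!\big(U_L\,\Psi\,U_L^{\dagger}\big)
=\Big(\bigotimes_{j\notin R}V_j\Big)\,\Tr_R\!\Big(\big(\textstyle\bigotimes_{j\in R}V_j\big)\,\Psi\,\big(\textstyle\bigotimes_{j\in R}V_j^{\dagger}\big)\Big)\,\Big(\bigotimes_{j\notin R}V_j^{\dagger}\Big)
=\Big(\bigotimes_{j\notin R}V_j\Big)\,\gamma^{\vec{i}}\,\Big(\bigotimes_{j\notin R}V_j^{\dagger}\Big),
\]
using that the $R$-supported factors pass through $\Tr_R$ by cyclicity and the kept-subsystem factors pull out of it. This is exactly where transversality is needed: for a non-transversal $U_L$ the decomposition would couple $R$ to the kept subsystems and the identity would fail. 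Thus the evaluated message equals $\Tr_R(U_L\Psi U_L^{\dagger})$ embedded at the same secret positions inside the noise arrays, i.e.\ exactly the ciphertext $\mathcal{E}$ produces on the new logical state.

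For decryption, Client uses $S$ to read off columns $S_1,\dots,S_n$, discards the noise, reassembles $\Tr_R(U_L\Psi U_L^{\dagger})$, applies the erasure-recovery channel $\mathcal{R}$ for the set $R$ (well-defined because $r<d$), obtains $U_L\Psi U_L^{\dagger}$, and decodes. Since $U_L$ is codespace-preserving with $U_L\ket{\psi}_L=(U\ket{\psi})_L$, the recovered state is the encoding of $U\ket{\vec{i}}$, so decryption returns the correct output (and measuring the first qubit yields $f(x)$ when $f$ is the Boolean function realized by the circuit); correctness is perfect, $\epsilon'=0$. Note that the whole decryption pipeline depends only on $S$ and on $C$, never on the circuit, which is what keeps it compatible with compactness. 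Because the evaluated ciphertext again has the exact form of a fresh encryption (noise columns maximally mixed, column $S_j$ carrying subsystem $j$ of a genuine codeword), the argument iterates over any circuit built from $\mathcal{T}_C$, with any needed $0/1$ ancilla wires supplied by the published encryptions.

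I expect the main obstacle to be organizational rather than conceptual: tracking which rows a multi-qubit transversal gate touches in each of the $n$ arrays, and verifying that the two commutation moves in the displayed identity hold in full generality (multi-qubit $V_j$, and an arbitrary $R$ meeting each subcode exactly once). The single hypothesis that genuinely must be invoked is $r<d$, which makes $R$ a correctable erasure set so that the recovery channel in the decryption step exists; the single structural fact doing the work is that a transversal logical gate is precisely one whose product decomposition across subsystems does not couple the withheld set $R$ to the rest.
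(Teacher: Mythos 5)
Your proof is correct and follows essentially the same route as the paper's: decompose $U_L$ as a tensor product across subsystems, have Server apply the $j$-th factor blindly to every column of the $j$-th array, and let Client repair the missing factors on the withheld set $R$ by erasure recovery (valid because $r<d$). The only cosmetic difference is that you trace out $R$ and treat the defect purely as an erasure, whereas the paper has Client retain $R$ as side information so that the reassembled state is $V^R U_L\ket{\psi}_L$ with $V^R$ a correctable error on $R$; as you note, the two bookkeepings are interchangeable. Your explicit partial-trace identity and the observation that $V_j$ fixes $I/2^p$ (so noise columns stay noise and the output is again a fresh-looking ciphertext) are details the paper leaves implicit, but they do not change the argument.
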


\begin{proof}
Let $U_L$ be the logical operator we wish to apply to some codestate $\ket{\psi}_L$.  By definition, $U_L \in \mathcal{T}_C$ implies $U_L$ can be decomposed as a product operator $U_1 \otimes \ldots \otimes U_{n+r}$ where $U_i$ is an operator that acts only on the $i$-th subsystem of the code.  Then, without knowledge of the secret key $S$, a third party can implement $U_L$ by applying the operator
\[ \bigotimes\limits_{i=1}^{n}\bigotimes\limits_{j=1}^m U_i \]
where each $U_i$ is an operator local to some subsystem in Server's possession (that is to say, on one of the columns in the corresponding array).  Returning the resulting data to a party with the secret key, that party can decrypt to obtain a state of the form $V^RU_L\ket{\psi}_L$, where $V^R$ is supported on the $r$ subsystems that Client has withheld.  Since $r<d$, viewing $V^R$ as an erasure error on $r$ subsystems, there exists some recovery channel $\mathcal{R}$ such that $\mathcal{R}(V^RU_L\ket{\psi_L}) = U_L\ket{\psi}_L$.  Decoding, we obtain $U\ket{\psi}$ as desired. \\
\end{proof}

Note that this scheme is a $\mathcal{T}_C$-homomorphic, non-leveled, and compact QHE scheme, since the recovery and decryption channel do not depend on the complexity of $U$.  

We now aim to compute the security $\epsilon(m,p)$ of the proposed scheme, namely the tradeoff between the size of the input $p$, the size of the encoding $mnp$, and the ITS guarantee.  To avoid confusion, we point out here that the code size $n$ is a constant, as we are not concatenating to achieve security, just amplifying the size of the noise into which we are embedding.  

We want to show that while the scheme is inefficient, its parameters still defeat Nayak's bound. To simplify the security proof, we use the more stringent requirement that the outputs are indistinguishable from uniformly random noise. Here, we will see that the nonlocality of the information stored in QECCs is essential in its allowing us to withhold qubits while still delegating computation to Server. This imposes the requirement of using quantum error-correcting codes, as evidenced by the following observation.

\begin{Lemma}
Suppose we replace the preceding scheme with one that does not withhold any of the physical qubits comprising the (pure state) code.  Then if $m = o(2^p)$, $\epsilon$ must be bounded away from zero.
\end{Lemma}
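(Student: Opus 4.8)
The plan is to use that, once we stop withholding physical qubits, every ciphertext is literally a pure codeword tensored with maximally mixed noise, and such a state is far too rank-deficient to be mistaken for uniform noise unless $m$ is already exponential in $p$.

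\emph{One ciphertext.} In the no-withholding ($r=0$) case the encoded state $\bigotimes_{\ell=1}^{p}\ket{i_\ell}_L$ is pure on its $np$ qubits, so for each key $S\in[m]^n$ the ciphertext $\gamma_S^{\vec i}$ is the rank-one projector $\ket{\vec i}_L\bra{\vec i}_L$ placed on the $np$ qubits lying in columns $S$, tensored with $I/2^{\,mnp-np}$ on the other $mnp-np$ qubits; equivalently, $\gamma_S^{\vec i}$ is maximally mixed on a subspace of dimension $2^{\,mnp-np}$. \emph{The mixture.} Hence $\rho_{\vec i}=E_S[\gamma_S^{\vec i}]=\tfrac1{m^n}\sum_S\gamma_S^{\vec i}$, an average of $m^n$ such states, has $\mathrm{rank}(\rho_{\vec i})\le m^n 2^{\,mnp-np}$, so it is supported on a subspace occupying at most a $(m/2^p)^n$ fraction of the $2^{mnp}$-dimensional total space. \emph{Distinguish and conclude.} Taking $P=\Pi_{\vec i}$, the projector onto $\mathrm{supp}(\rho_{\vec i})$, in the variational identity $\tfrac12\|\rho-\sigma\|_1=\max_{P\le I}\Tr(P(\rho-\sigma))$ recalled above gives
\[\tfrac12\big\|\rho_{\vec i}-I/2^{mnp}\big\|_1\ \ge\ \Tr(\Pi_{\vec i}\rho_{\vec i})-\Tr\!\big(\Pi_{\vec i}\,I/2^{mnp}\big)\ =\ 1-\mathrm{rank}(\Pi_{\vec i})\,2^{-mnp}\ \ge\ 1-(m/2^p)^n .\]
Under the stringent security notion fixed just above — indistinguishability from uniform noise — this is a lower bound $\epsilon\ge 2\big(1-(m/2^p)^n\big)$; since $n$ is a fixed constant, $m=o(2^p)$ forces $(m/2^p)^n\to0$ and hence $\epsilon\to2$, so $\epsilon$ stays bounded away from zero.

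The chain of inequalities is immediate, so the only things to watch are (i) that with $r=0$ no other source of mixing inflates the rank — the published ancilla encryptions and any evaluation key are fixed strings and contribute none — and (ii) the conceptual point the lemma isolates, namely that this is precisely where erasure-correction would buy a small $m$: withholding $r\ge1$ subsystems replaces the rank-one codeword by its rank-$\le2^{rp}$ reduction, dropping the threshold from $m\gtrsim2^p$ to $m\gtrsim2^{(n-r)p/n}$. I expect no genuine difficulty beyond stating that contrast cleanly.
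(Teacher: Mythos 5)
Your proof is correct and takes essentially the same rank-counting approach as the paper: bound $\mathrm{rank}(E_S[\gamma_S])\le m^n 2^{np(m-1)}$, note that this support occupies only a $(m/2^p)^n$ fraction of the ambient space, and conclude the trace distance to the maximally mixed state stays bounded away from zero. Your final step is slightly more explicit than the paper's (you spell out the variational bound via the support projector, whereas the paper just asserts the trace-distance conclusion), and you correctly close with $m/2^p\to0$ rather than the paper's unnecessarily strong side remark that $\log m=o(p)$.
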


\begin{proof}
Counting the rank of the encrypted state, note that rank$(\gamma_S) = 2^{np(m-1)}$.  Then,
\begin{align*}
\text{rank}(E_S[\gamma_S]) & \leq m^n 2^{np(m-1)} \\
& \leq 2^{n(p(m-1)+\log(m))}.
\end{align*}
Thus, the fraction of nonzero eigenvalues must be at most $(2^n)^{\log(m) - p}$.  Since $\log(m) = o(p)$, the fraction of nonzero eigenvalues goes to zero, and so $\|E_S[\gamma_S] - I/2^{mnp}\|_1$ must be bounded away from zero as claimed.
\end{proof}

\section{Security tradeoff for the QHE coding scheme}

Our aim is to give (inefficient, but sufficient) security parameters for the coding QHE scheme.  We will then argue that if there were a QECC implementing a sufficiently large transversal gate set (such as the set of all classical reversible gates), then it would violate Nayak's bound with these parameters.  

We will first need a small lemma on the structure of the partial trace operator. The proof can be found in Appendix \ref{Appendix1}.

\begin{Lemma}\label{mixed}
\normalfont
For Hilbert space decomposition $\mathcal{H} = \mathcal{H}_{\bar{\Delta}_1} \otimes \mathcal{H}_{\Delta} \otimes \mathcal{H}_{\bar{\Delta}_2}$,
\[\Tr\left( (\rho^{\bar{\Delta}_1\Delta} \otimes I^{\bar{\Delta}_2})(I^{\bar{\Delta}_1} \otimes \sigma^{\Delta\bar{\Delta}_2}) \right) = \Tr\left( \Tr_{\bar{\Delta}_1}(\rho) \Tr_{\bar{\Delta}_2}(\sigma)\right).\]

\end{Lemma}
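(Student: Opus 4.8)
The plan is to prove the identity by reducing to elementary tensor operators and then extending by bilinearity, since both sides of the claimed equation are (separately) linear in $\rho$ and in $\sigma$. First I would note that every operator on $\mathcal{H}_{\bar{\Delta}_1}\otimes\mathcal{H}_\Delta$ is a finite linear combination of operators of the form $\rho_1\otimes\rho_2$ with $\rho_1\in L(\mathcal{H}_{\bar{\Delta}_1})$ and $\rho_2\in L(\mathcal{H}_\Delta)$, and similarly every operator on $\mathcal{H}_\Delta\otimes\mathcal{H}_{\bar{\Delta}_2}$ is a linear combination of $\sigma_2\otimes\sigma_3$ with $\sigma_2\in L(\mathcal{H}_\Delta)$, $\sigma_3\in L(\mathcal{H}_{\bar{\Delta}_2})$. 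Hence it suffices to verify the equality for such elementary tensors.

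For $\rho=\rho_1\otimes\rho_2$ and $\sigma=\sigma_2\otimes\sigma_3$, the left-hand side becomes
\[\Tr\big((\rho_1\otimes\rho_2\otimes I^{\bar{\Delta}_2})(I^{\bar{\Delta}_1}\otimes\sigma_2\otimes\sigma_3)\big)=\Tr(\rho_1\otimes\rho_2\sigma_2\otimes\sigma_3)=\Tr(\rho_1)\,\Tr(\rho_2\sigma_2)\,\Tr(\sigma_3),\]
using multiplicativity of $\otimes$ in each factor and the factorization of the trace over a tensor product. On the other hand $\Tr_{\bar{\Delta}_1}(\rho_1\otimes\rho_2)=\Tr(\rho_1)\,\rho_2$ and $\Tr_{\bar{\Delta}_2}(\sigma_2\otimes\sigma_3)=\Tr(\sigma_3)\,\sigma_2$, both operators on $\mathcal{H}_\Delta$, so the right-hand side is
\[\Tr\big(\Tr(\rho_1)\rho_2\cdot\Tr(\sigma_3)\sigma_2\big)=\Tr(\rho_1)\,\Tr(\sigma_3)\,\Tr(\rho_2\sigma_2),\]
which agrees with the left-hand side. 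Extending by bilinearity in $\rho$ and $\sigma$ finishes the proof. As an alternative I could instead give a bare-hands computation in a product basis $\{\ket{a}\ket{b}\ket{c}\}$ of $\mathcal{H}_{\bar{\Delta}_1}\otimes\mathcal{H}_\Delta\otimes\mathcal{H}_{\bar{\Delta}_2}$: writing the matrix elements of the two factors, multiplying, and identifying the row index with the column index, one lands on $\sum_{a,b,c,b'}\rho_{ab,ab'}\sigma_{b'c,bc}$, and recognizing $\sum_a\rho_{ab,ab'}$ and $\sum_c\sigma_{b'c,bc}$ as the matrix elements on $\mathcal{H}_\Delta$ of $\Tr_{\bar{\Delta}_1}(\rho)$ and $\Tr_{\bar{\Delta}_2}(\sigma)$ gives exactly $\Tr(\Tr_{\bar{\Delta}_1}(\rho)\,\Tr_{\bar{\Delta}_2}(\sigma))$.

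There is no substantive obstacle here; the only thing to be careful about is the index/subsystem bookkeeping — making sure the $\bar{\Delta}_2$-identity in the first factor is contracted against the $\bar{\Delta}_2$-part of $\sigma$ (and symmetrically for $\bar{\Delta}_1$), so that the surviving contraction runs exactly over the shared factor $\mathcal{H}_\Delta$. I expect to present the elementary-tensor argument in the appendix for brevity.
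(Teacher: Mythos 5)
Your primary argument (reduce to elementary tensors, then extend by bilinearity) is correct and takes a genuinely different route from the paper. The paper proves the lemma via the ``bare-hands'' product-basis computation you sketch only as an alternative: it expands $\rho\otimes I$ and $I\otimes\sigma$ in outer products $\ket{i}\bra{i'}\otimes\ket{j}\bra{j'}\otimes\ket{k}\bra{k'}$, multiplies, takes the trace, and identifies the surviving index sum $\sum_{i,n,j,j'}a_{i,i,j,j'}b_{j',j,n,n}$ with the analogous expansion of $\Tr(\Tr_{\bar{\Delta}_1}(\rho)\Tr_{\bar{\Delta}_2}(\sigma))$. Your main approach is cleaner and arguably more robust: observing that both sides are separately linear in $\rho$ and $\sigma$, it suffices to check $\rho=\rho_1\otimes\rho_2$, $\sigma=\sigma_2\otimes\sigma_3$, for which both sides collapse to $\Tr(\rho_1)\,\Tr(\rho_2\sigma_2)\,\Tr(\sigma_3)$ by multiplicativity of the trace over tensor factors. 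This eliminates essentially all index bookkeeping (indeed, the paper's version carries a couple of benign bra/ket index typos precisely because of that bookkeeping), at the modest conceptual cost of invoking that elementary tensors span the operator space. Both proofs are correct; yours is the one I would keep.
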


With this we are ready to prove the security tradeoff between $\epsilon, p,$ and $m$.  We adopt the same notation used in the proposed scheme for convenience, and note that we are demanding the stronger condition that outputs are indistinguishable from random noise.

\begin{Proposition}
For the scheme described in Figure \ref{qhe}, letting $K = 2^p$ be the dimension of any subsystem and for some $c \in (0,1)$, we have
\[ \| \left(I / K^{mn}\right) - E_S[\gamma_S] \|_1 \leq \epsilon(K,m) \]
for $\epsilon(K,m) = \left( \left(\frac{m-1}{m}\right)^n - 1 + K^{-c}\left(\frac{2K}{m}\right)^n\right)^{1/2}$. \\
\end{Proposition}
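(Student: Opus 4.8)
The plan is to reduce the trace‑norm bound to a second‑moment estimate and then feed in the structure of the code. By the standard inequality $\|M\|_1 \le \sqrt{D}\,\|M\|_2$ for operators on a $D$–dimensional space (here $D = K^{mn}$) together with $\|I/D - \rho\|_2^2 = \Tr(\rho^2) - 1/D$, it suffices to show
\[
K^{mn}\,\Tr\!\big((E_S[\gamma_S])^2\big) \;\le\; \Big(\tfrac{m-1}{m}\Big)^n + K^{-c}\Big(\tfrac{2K}{m}\Big)^n ,
\]
since the claimed bound on $\epsilon(K,m)$ is exactly this minus $1$. Because $\Tr\!\big((E_S[\gamma_S])^2\big) = E_{S,S'}\!\big[\Tr(\gamma_S\gamma_{S'})\big]$, everything reduces to evaluating the pairwise overlaps $\Tr(\gamma_S\gamma_{S'})$ (for a fixed but arbitrary plaintext $\vec i$, so the estimate is uniform in $\vec i$) and averaging over secret keys.

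First I would compute $\Tr(\gamma_S\gamma_{S'})$. Writing $\gamma=\gamma^{\vec i}$ for the state on the $n$ retained subsystems obtained after tracing out $R$, the encryption $\gamma_S$ is, up to relabeling of columns, the operator that acts as $\gamma$ on the factors $\bigotimes_j \mathcal{H}_j^{(S_j)}$ (the $S_j$‑th column of array $j$) and as $I/K$ on every other column. Set $A = \{\,j\in[n] : S_j = S'_j\,\}$. In the product $\gamma_S\gamma_{S'}$ the columns that are maximally mixed in both states contribute a scalar $K^{-2}$ each; the columns $j\in A$ carry the $j$‑th subsystem of $\gamma$ in both factors; and for $j\in\bar A$ the data of one state meets the maximally mixed noise of the other. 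Collecting the scalar contributions of the $n(m-2)+|A|$ doubly‑mixed columns and applying Lemma~\ref{mixed} with $\Delta$ the $A$‑subsystems of $\gamma$ and $\bar\Delta_1,\bar\Delta_2$ the (disjoint) $\bar A$‑columns of $\gamma_S$ and $\gamma_{S'}$, one obtains
\[
\Tr(\gamma_S\gamma_{S'}) \;=\; K^{\,|A|-mn}\,\Tr\!\big(\gamma_A^2\big),\qquad \gamma_A := \Tr_{\bar A}(\gamma),
\]
which correctly specializes to $K^{-mn}$ when $S,S'$ disagree everywhere and to $K^{-(m-1)n}\Tr(\gamma^2)$ when $S=S'$. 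Averaging over uniform $S,S'$, the events $\{j\in A\}$ are independent of probability $1/m$, and $\gamma_A$ depends only on the set $A$, so
\[
K^{mn}\,\Tr\!\big((E_S[\gamma_S])^2\big) \;=\; \sum_{A\subseteq[n]} \Big(\tfrac1m\Big)^{|A|}\Big(\tfrac{m-1}{m}\Big)^{n-|A|} K^{|A|}\,\Tr\!\big(\gamma_A^2\big).
\]
The term $A=\emptyset$ gives exactly $\big(\tfrac{m-1}{m}\big)^n$; if for every nonempty $A$ we have $\Tr(\gamma_A^2)\le K^{-c}$, the remaining terms are bounded by $K^{-c}\sum_{k\ge1}\binom{n}{k}(K/m)^k = K^{-c}\big((1+K/m)^n-1\big)\le K^{-c}(2K/m)^n$ (in the relevant regime $m\le K$), which finishes the estimate.

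The heart of the argument — and the step I expect to be the real obstacle — is the bound $\Tr(\gamma_A^2)\le K^{-c}$ for nonempty $A$, where the hypotheses that $C$ is an $r$‑fold code with $r<d$ finally get used. Since the encoded input is the product state $\bigotimes_{\ell=1}^{p}\ket{i_\ell}_L$ and each $\ket{i_\ell}_L = \bigotimes_j \ket{\psi_{i_\ell j}}$ factors over the subcodes, $\gamma_A$ factors as $\bigotimes_{\ell=1}^{p}\bigotimes_j \tau^{(\ell,j)}_{A\cap j}$, where $\tau^{(\ell,j)}_{A\cap j}$ is the reduction of $\ket{\psi_{i_\ell j}}$ onto its positions lying in $A$. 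Because $R$ removes one subsystem from every subcode, any nonempty $A\cap j$ is a \emph{proper} nonempty subset of subcode $j$; and since, by Definition~\ref{Definition8}, $\ket{\psi_{ij}}$ does not factor across any bipartition, the corresponding reduction has rank at least two and hence purity at most $1-\delta$, where $\delta>0$ is obtained by minimizing over the finitely many choices of $i\in\{0,1\}$, subcode $j$, and admissible subset. As a nonempty $A$ meets at least one subcode, $\Tr(\gamma_A^2)\le(1-\delta)^p$ uniformly in $\vec i$, and choosing any $c\in(0,1)$ with $c\le -\log_2(1-\delta)$ yields $\Tr(\gamma_A^2)\le 2^{-cp}=K^{-c}$. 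The subtlety to watch here is precisely why the maximally redundant codes are excluded: without the slack $r<d$ one cannot withhold $R$ while keeping every subcode piece a proper subset, and if a withheld piece could coincide with an entire distance‑$1$ subcode then the corresponding $\tau$ could be pure, destroying the bound.
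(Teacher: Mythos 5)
Your proposal is correct and takes essentially the same route as the paper: the same Cauchy--Schwarz reduction of $\|\cdot\|_1$ to the second moment, the same decomposition of $E_{S,S'}[\Tr(\gamma_S\gamma_{S'})]$ by the agreement set $A=\{j: S_j=S'_j\}$, the same appeal to Lemma~\ref{mixed}, and the same observation that withholding $R$ makes the reduced subcode pieces mixed, giving the $K^{-c}$ factor. Your notation $\Tr(\gamma_A^2)$ is a cleaner rendering of the paper's $\Tr\bigl(\Tr_{\bar\Delta}(\gamma)\Tr_{\bar\Delta}(\pi\gamma\pi^\dag)\bigr)$ (they are the same quantity since both reduce $\gamma$ to the $A$-subsystems), and your closing argument for the existence of a uniform $c>0$ by minimizing over the finitely many reductions makes explicit a point the paper leaves terse.
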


\begin{proof}
By Cauchy-Schwartz,
\begin{align*}
\| \left(I/K^{mn}\right) - E_S[\gamma_S] \|_1^2 & \leq K^{mn}\| \left(I/K^{mn}\right) - E_S[\gamma_S]\|_2^2 \\
&\leq K^{mn}\Tr(E_S[\gamma_S]^2) - \left(\frac{2}{K^{mn}}\right)\Tr(E_S[\gamma_S]) + \left(\frac{1}{K^{2(mn)}}\right)\Tr(I) \\
& \leq K^{mn} \Tr(E_S[\gamma_S]^2) - 1.
\end{align*}
where the third line follows by noting that, as a quantum state, $\Tr(E_S[\gamma_S]) = 1$.  We write $|S \cap S'|$ to denote the size of the intersection of $S$ and $S'$ considered as sets.  We can then decompose, for $p_\ell = \Pr_{S,S'} [|S \cap S'| = \ell]$,
\begin{align*}
K^{mn} \Tr(E_S[\gamma_S]^2) &= \left(\frac{K^{mn}}{m^{2n}}\right) \sum\limits_{S,S'} \Tr(\gamma_S\gamma_S') \\
(*) \hspace{1mm} &= K^{mn} \sum\limits_{\ell = 0}^n p_\ell \Tr\left( E[(\gamma_S\gamma_{S'}) \hspace{0.5mm} \big{|} \hspace{1mm} |S\cap S'| = \ell]\right).
\end{align*}
\\
Note that $p_\ell = \frac{ \binom{n}{\ell}(m-1)^{(n-\ell)}}{m^n} \leq \binom{n}{l}/m^\ell$ and that $p_0 = (\frac{m-1}{m})^n$.  Furthermore, up to a permutation on the coordinates, we may write for $\dim(I) = K^{mn-2n}$,
\begin{align*}
K^{mn}E[(\gamma_S\gamma_{S'}) \hspace{0.5mm} \big{|} \hspace{1mm} |S\cap S'| = 0] &= K^{mn} \Tr \left( (\gamma/K^n) \otimes (\gamma/K^n) \otimes \left(I/ K^{(mn-2n)}\right)^2\right) \\
&= 1
\end{align*}
again by noting that $\gamma$ is a quantum state of trace one and by multiplicativity of trace over tensor products.  Next consider the general case $| S \cap S'| = \ell$.  Then up to a permutation on the coordinates and for some $\pi \in S_n$, for $\Delta$ the subsystem of the intersection $S \cap S'$,
\begin{align*}
K^{mn} \Tr(\gamma_S\gamma_{S'}) &= K^{mn} \Tr\left((I/K^{n-\ell} \otimes \gamma)(\pi \gamma \pi^{\dag} \otimes I/K^{n-\ell}) \otimes \left(I/K^{(mn - 2n + \ell)}\right)^2\right)\\
&= K^{\ell} \Tr\left((I \otimes \gamma)(\pi\gamma\pi^\dag \otimes I)\right) \\
&= K^{\ell} \Tr\left(\Tr_{\bar{\Delta}}(\gamma)\Tr_{\bar{\Delta}}(\pi\gamma\pi^\dag)\right)
\end{align*}
where the final line follows from Lemma \ref{mixed}. Then, because we have withheld a subsystem from each subcode of the underlying QECC, in any row $i$ we have that $\Tr_{\bar{\Delta}}(\gamma^i)$ is mixed.  It follows that $\Tr\left(\Tr_{\bar{\Delta}}(\gamma^i)\Tr_{\bar{\Delta}}(\pi\gamma^i\pi^\dag)\right) < 1$. So by separability across each encoded qubit and again by multiplicativity of trace across tensor products,

\[\Tr\left(\bigotimes_{j=1}^p\Tr_{\bar{\Delta}}(\gamma^{i_j})\Tr_{\bar{\Delta}}(\pi\gamma^{i_j}\pi^\dag)\right) = \prod\limits_{j=1}^p \Tr\left(\Tr_{\bar{\Delta}}(\gamma^{i_j})\Tr_{\bar{\Delta}}(\pi\gamma^{i_j}\pi^\dag)\right). \]
It follows that there exists some $c \in (0,1)$ so that 
\[K^{mn} \Tr(\gamma_S\gamma_{S'}) \leq K^{\ell - c}. \]
Putting this all together, we observe that 
\begin{flalign*}
\hspace {1 cm} K^{mn} \sum\limits_{\ell = 1}^n p_\ell \Tr\left( E[(\gamma_S\gamma_{S'}) \hspace{0.5mm} \big{|} \hspace{1mm} |S\cap S'| = \ell]\right) & \leq K^{-c} \sum\limits_{\ell = 1}^n\binom{n}{\ell}\left(\frac{K}{m}\right)^\ell & \\
& \leq K^{-c} \left(\left(1 + \frac{K}{m}\right)^n - 1\right) & \\
& \leq K^{-c} \left(\frac{2K}{m}\right)^n
\end{flalign*}
Including the first term in the sum, we get,

\[ (*) \leq \left(\frac{m-1}{m}\right)^n + K^{-c}\left(\frac{2K}{m}\right)^n \]
and so,

\[\epsilon(K,m) = \left( \left(\frac{m-1}{m}\right)^n - 1 + K^{-c}\left(\frac{2K}{m}\right)^n\right)^{1/2} \]
as desired.

\end{proof}

\section{Limitations on classical transversal computation}

We are left with two competing bounds.  On the one hand, it follows from Nayak's bound (Appendix \ref{AppendixITSQHE}) that, for any $\mathcal{F}$-ITS-QHE encryption scheme with security $\epsilon$ and communication size $s$,
\[ s \geq \log(|\mathcal{F}|) (1 - H(\epsilon)). \]
If we choose parameters that do not leak some constant fraction of information about our input, then as $\epsilon \rightarrow 0$ we see that for $s$ chosen as some fixed function of the input size, it must be that $s = \Omega(\log(|\mathcal{F}|))$.  Using the notation and parameters from the aforementioned coding scheme, this means that $mnp = \Omega(\log(|\mathcal{F}_p|))$ for $\mathcal{F}_p$ the restriction of functions in $\mathcal{F}$ to $p$-bit inputs.  Note that we can assume no ancilla overhead since the constant gets absorbed into this asymptotic bound.  

Now by construction of the scheme, $\mathcal{F}$ is the transversal gate set for the underlying choice of quantum error-correcting code.  Next, we would like to choose $m$ as a function of $K$ so that $\epsilon \rightarrow 0$.  For this, it suffices to choose $m$ as a function of $K$ so that 
\[ \lim_{K \rightarrow \infty} K^{-c}\left(\frac{2K}{m}\right)^n = 0.\]
Equivalently, we require $m = \omega(K^{1-\left(\frac{c}{n}\right)})$.  Then for some $c'< 1$, we can select $m = K^{c'}$ and still have $\epsilon \rightarrow 0$.  Plugging this back into Nayak's bound, we see that asymptotically
\[ K^{c'}\log(K) = \Omega(\log(|\mathcal{F}_p|))\]
for $|\mathcal{F}_p|$ the size of the function class, seen itself as a function returning the number of unique members in the class on $p$-bit inputs.  In particular, $\mathcal{F}_p$  cannot be the set of all Boolean functions, for then $\log(|\mathcal{F}_p|) = K$.  This shows that no code satisfying the hypotheses of our scheme can implement Toffoli transversally.  

We now justify our earlier assumptions on the structure of candidate $r$-fold codes.  Suppose an $r$-fold $[[n,1,d]]$ QECC could implement a logical Toffoli gate transversally. First note that the tensor decomposition between the logical states must align, or else the restriction of logical Toffoli to one element of the product would unitarily map a pure state to a mixed state.  Furthermore, we can think of the QECC criterion in Definition \ref{QECC} as a diagonal and off-diagonal condition: for all $|E| < d$, 
\begin{align*}
\langle 0_L | E | 0_L \rangle &= \langle 1_L | E | 1_L \rangle, \\
\langle 0_L | E | 1_L \rangle &= 0.
\end{align*}

Since the Paulis form an operator basis, we can always assume that $E$ is an element of the Pauli group.  Then, for $r$-fold codes with logical basis states $\ket{i}_L = \bigotimes_{j=1}^r\ket{\psi_{ij}}$, this becomes

\[\prod\limits_{k=1}^r \langle \psi_{ik} | E_k | \psi_{jk} \rangle = c_E \delta_{ij} \]

where $E = E_1 \otimes \ldots \otimes E_r$.  Note then that if $\ket{\psi_{0j}} \not\perp \ket{\psi_{1j}}$, we can trace out the corresponding subsystem and obtain a code with the same correctable error set on the complement of that system.  Furthermore, if $r>d$, then we can again trace out any $r-d$ subcode subsystems to obtain a code with the same correctable error set on the complement.  Both of these observations follow from noticing that these subcodes must themselves satisfy the diagonal condition,

\[ \langle \psi_{0j} | E | \psi_{0j} \rangle = \langle \psi_{1j} | E | \psi_{1j} \rangle. \]

It follows from the security proof that if $r<d$, then the code would satisfy the hypotheses of our scheme and violate the lower bound in Proposition \ref{Impossibility}.  Thus, $r = d$.  Furthermore, logical transversal Toffoli on the entire code must restrict (up to global phase) to a logical transversal Toffoli gate on the subcodes, each of which is $1$-fold by definition.  Thus, each subcode must itself have distance $1$.  To summarize,

\begin{Theorem} \label{Toffoli}
If a QECC is not a maximally redundant code, then it does not admit a classical-reversible universal transversal gate set.  In particular, no such code can implement the Toffoli gate transversally.
\end{Theorem}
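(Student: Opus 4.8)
The plan is to prove the contrapositive: assume $C$ is an $[[n,1,d]]$ QECC whose transversal group $\mathcal{T}_C$ is universal for classical reversible computation (so in particular $\text{Toff}\in\mathcal{T}_C$), and deduce that $C$ must be a maximally redundant code in the sense of Definition~\ref{Definition8}. The two ingredients are the coding QHE scheme of Figure~\ref{qhe} — which is $\mathcal{T}_C$-homomorphic whenever its input code is an $r$-fold QECC with $r<d$ — together with the security tradeoff and Nayak's bound recorded above.

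First I would put $C$ into canonical $r$-fold form. Decomposing a codeword into tensor-indecomposable factors, the factorizations of $\ket{0}_L$ and $\ket{1}_L$ must align along a common partition: otherwise the restriction of transversal $\text{Toff}$ on $C^{\otimes 3}$ to a single factor would map a pure state to a mixed state, contradicting unitarity. So write $\ket{i}_L=\bigotimes_{j=1}^{r}\ket{\psi_{ij}}$ with $\ket{\psi_{0j}},\ket{\psi_{1j}}$ supported on a common subsystem. Next I would carry out the reduction sketched in the discussion before the theorem: the QECC condition of Definition~\ref{QECC} splits, for every weight-$<d$ Pauli $E=E_1\otimes\cdots\otimes E_r$, into the diagonal part $\langle 0_L|E|0_L\rangle=\langle 1_L|E|1_L\rangle$ and the off-diagonal part $\langle 0_L|E|1_L\rangle=0$, and for an $r$-fold code this becomes $\prod_{k=1}^{r}\langle\psi_{ik}|E_k|\psi_{jk}\rangle=c_E\delta_{ij}$. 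Hence if $\ket{\psi_{0j}}\not\perp\ket{\psi_{1j}}$ for some $j$, or if $r>d$, I can trace out that subsystem (respectively any $r-d$ subcode subsystems) and obtain a code on the complement with the same correctable weight-$<d$ error set, still an $r$-fold code with one fewer factor, and — since a transversal operator $\bigotimes_s U_s$ automatically factors as a tensor product across the removed subsystem — still carrying a transversal gate set that acts on the logical qubits as all of $\mathcal{T}_C$ (up to harmless logical diagonals). Iterating leaves a code with $\ket{\psi_{0j}}\perp\ket{\psi_{1j}}$ for all $j$ and $r\le d$.

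Now I would split on $r$ versus $d$. If $r<d$, the reduced code is an admissible input to the scheme of Figure~\ref{qhe}, so the scheme is $\mathcal{T}_C$-homomorphic; choosing $m=K^{c'}$ with $c'\in\bigl(1-\tfrac{c}{n},1\bigr)$ (nonempty because $c\in(0,1)$) makes $\epsilon(K,m)\to 0$ as $K=2^p\to\infty$ by the security Proposition, so Nayak's bound (Appendix~\ref{AppendixITSQHE}) forces the communication size $mnp$ to satisfy $mnp=\Omega(\log|\mathcal{F}_p|)$; but a classical-reversible-universal $\mathcal{T}_C$ realizes a function class $\mathcal{F}_p$ with $\log|\mathcal{F}_p|=\Omega(K)$ (e.g.\ all Boolean functions on $p$ bits), while $mnp=O(K^{c'}\log K)=o(K)$ — a contradiction. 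Hence $r=d$. If $r=d$, then since the $\ket{\psi_{0j}}$ and $\ket{\psi_{1j}}$ are orthogonal, transversal $\text{Toff}$ on $C^{\otimes 3}$ restricts (up to a logical diagonal) to a transversal Toffoli on each subcode, and each subcode is $1$-fold by definition; if some subcode had distance $\ge 2$, it would itself be an admissible input to the scheme ($1$-fold, $1<2\le d_{\mathrm{sub}}$) and the same argument gives a contradiction. So every subcode has distance $1$, and together with $r=d$ (and $d\ge 3$, the low-distance cases being vacuous or dispatched by the same reduction) this is exactly a maximally redundant code.

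The parts I expect to be routine are the scheme instantiation, the asymptotic choice of $m$, and the invocation of Nayak's bound — these follow directly from the lemmas and Proposition already proved. The step I expect to be the real obstacle is the structural reduction: showing that tracing out a non-orthogonal or excess subcode subsystem simultaneously preserves (i) the QECC property at the same weight, (ii) the $r$-fold product structure, and (iii) a transversal gate set that still acts universally on the logical qubits — together with the bookkeeping, in the $r=d$ case, that the restriction of transversal $\text{Toff}$ to a single subcode is a genuine logical transversal Toffoli (up to diagonals that do not affect Boolean evaluation) on a $1$-fold code.
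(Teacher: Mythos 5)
Your proposal takes essentially the same route as the paper: align the tensor factorizations using the pure-to-mixed argument, reduce to an $r$-fold code with orthogonal subcodes and $r\le d$ by tracing out non-orthogonal or excess factors (justified via the diagonal QECC condition), rule out $r<d$ by running the coding QHE scheme against Nayak's bound, and then in the $r=d$ case argue that transversal $\text{Toff}$ restricts to transversal Toffoli on each $1$-fold subcode, forcing each subcode to have distance $1$. You make two steps more explicit than the paper does --- that the restricted transversal gate set survives the reduction up to harmless logical phases, and that a distance-$\ge 2$ subcode would itself be an admissible input to the scheme --- but these are fillings-in, not a different argument.
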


Note also that for the scheme in Figure \ref{qhe}, for any $m = \omega(K^{1-(\frac{c}{n})})$, $\epsilon(K)$ is negligible in $p$.  Summarizing the parameters of the coding scheme:

\begin{Proposition}
For any $r$-fold $[[n,1,d]]$ quantum error-correcting code $C$ with $r<d$ and with transversal gate set $\mathcal{T}_C$, the described protocol is a compact quantum $\mathcal{T}_C$-homomorphic encryption scheme with security $\epsilon = negl(p)$ for $p$ the input size and with encoding size $m = 2^{pc'}$ for some $c' < 1$.
\end{Proposition}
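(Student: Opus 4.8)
The plan is to obtain this proposition essentially by bookkeeping, since it is a summary statement: it only has to check that the single parameter choice $m = 2^{pc'}$ makes the scheme of Figure~\ref{qhe} simultaneously $\mathcal{T}_C$-homomorphic, compact, and $\negl(p)$-secure, with noise block size $m = 2^{pc'}$. So I would assemble three ingredients already in hand — the homomorphism lemma, the compactness remark, and the security tradeoff proposition — and then specialize the last of these to $m = 2^{pc'}$.

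First I would dispatch homomorphism and compactness. Because $C$ is an $r$-fold $[[n,1,d]]$ QECC with $r<d$, the earlier lemma showing that $\mathcal{E}$ is $\mathcal{T}_C$-homomorphic applies verbatim: a gate $U_L = U_1 \otimes \cdots \otimes U_{n+r} \in \mathcal{T}_C$ is evaluated by Server columnwise on the embedded qubits without knowing $S$, and the $r < d$ withheld subsystems form an erasure error that a fixed recovery channel corrects before Client decodes. Compactness (and the non-leveled property) follows because this recovery-and-decode step depends only on $C$, never on $U$, exactly as in the remark after that lemma.

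Next I would handle security by substitution. I start from the bound of the security tradeoff proposition: with $K = 2^p$ the dimension of a subsystem, there is a code-dependent constant $c \in (0,1)$ with
\[ \| (I/K^{mn}) - E_S[\gamma_S] \|_1 \;\le\; \left( \left(\tfrac{m-1}{m}\right)^n - 1 + K^{-c}\left(\tfrac{2K}{m}\right)^n \right)^{1/2}. \]
I would first note that $c$ is bounded away from $0$: it arises from the product over the $p$ encoded qubits of single-qubit overlaps $\Tr\bigl(\Tr_{\bar\Delta}(\gamma^{i_j})\,\Tr_{\bar\Delta}(\pi\gamma^{i_j}\pi^\dag)\bigr)$, and since $i_j$ ranges over $\{0,1\}$, $\pi$ over the finite group $S_n$, and $\Delta$ over finitely many subsystems, these strictly-mixed reductions give overlaps uniformly below $1$, so $c$ is a fixed constant of $C$. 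Then I would choose $c' \in (1 - c/n,\,1)$ — nonempty precisely because $c>0$ — and set $m = K^{c'} = 2^{pc'}$. Since $(\tfrac{m-1}{m})^n - 1 \le 0$, the bracket is at most $K^{-c}(2K/m)^n = 2^{n}\,2^{p(n(1-c')-c)}$, and the exponent $n(1-c')-c$ is strictly negative; hence $\epsilon = \epsilon(K,m) \le 2^{n/2}\,2^{-p\delta}$ for some $\delta > 0$, i.e. $\epsilon = \negl(p)$. Indistinguishability of the encryptions $E_S[\gamma_S^{\vec{x}}]$ and $E_S[\gamma_S^{\vec{y}}]$ of any two inputs then follows from the triangle inequality through $I/K^{mn}$. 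Finally $m = 2^{pc'}$ is by definition the claimed block size, and the total message length is $mnp = np\,2^{pc'}$, exponential in $p$ but $o(2^p)$ since $c' < 1$.

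The step I would be most careful about — though I do not expect it to be a real obstacle — is the interplay between the fixed code constant $c$ and the freedom in choosing $c'$: one needs $c$ strictly positive so that the window $(1-c/n,\,1)$ is nonempty, and this is exactly what the security tradeoff proposition already supplies (withholding one subsystem from each distance-$\ge 2$ subcode forces the relevant single-qubit reductions to be strictly mixed). The genuine analytic work was done in deriving the security tradeoff; this proposition just records the consequence of plugging in $m = 2^{pc'}$.
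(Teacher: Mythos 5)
Your proposal is correct and follows the same route the paper takes: the paper treats this proposition as a summary of earlier results, and the surrounding text does exactly your substitution — choose $c' \in (1-c/n, 1)$, set $m = K^{c'} = 2^{pc'}$, and observe the bracket in the security bound becomes $\le 2^n K^{n(1-c')-c}$ which decays exponentially in $p$. Your added remark about why $c$ is a constant bounded away from zero (finite range of $i_j$, $\pi$, $\Delta$, so the strictly-mixed overlaps are uniformly below $1$) is a useful clarification that the paper leaves implicit, but it is the same reasoning.
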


While this is highly inefficient, we pause to give some intuition for why it suits our purposes.  On the one hand, we can envision trivial ``hiding'' schemes that have encoding length $2^p$ in each bit.  Nayak's bound allows for higher efficiency, roughly demanding that encodings implementing the set of all classical functions on $p$ bits homomorphically must have length at least $(2^p/p)$ in each bit.  Finally our scheme, with encoding length $2^{pc'}$ for some $c' \in (0,1)$, is just efficient enough to defeat this bound and allow us to argue Theorem \ref{Toffoli}.

Because these maximally redundant codes have a simple design, if we further assume that they are additive, we can use the additional stabilizer structure to argue directly that they cannot implement logical Toffoli transversally.  From this observation, we directly obtain the following.

\begin{Corollary} \label{Additive}
No additive QECC can implement transversal Toffoli.
\end{Corollary}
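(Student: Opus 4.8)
The plan is to combine Theorem~\ref{Toffoli} with a direct stabilizer argument on the only surviving class. By Theorem~\ref{Toffoli}, an additive code admitting transversal Toffoli must be a maximally redundant code, so it suffices to rule out transversal Toffoli for an additive maximally redundant code $C$: a stabilizer $[[n,1,d]]$ code that is $d$-fold with $d\geq 3$, $\ket{i}_L=\bigotimes_{j=1}^d\ket{\psi_{ij}}$, each subcode $C_j=\mathrm{Span}\{\ket{\psi_{0j}},\ket{\psi_{1j}}\}$ of distance $1$.

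First I would pin down structure. Since $\ket{0}_L$ is a stabilizer state that is a product across the subcode partition, each factor $\ket{\psi_{0j}}$ is itself a stabilizer state (a pure reduced state of a stabilizer state is a stabilizer state), and likewise $\ket{\psi_{1j}}$; hence each $C_j$ is a stabilizer code, with logicals $\bar X_j,\bar Z_j$ and stabilizer $\mathcal{S}_j$. The same factorization shows $\mathrm{stab}(\ket{0}_L)=\prod_j\mathrm{stab}(\ket{\psi_{0j}})$, which forces $\bar Z_C$ to have a representative $\bar Z_j\otimes I$ supported inside any single $C_j$. Because $d\geq 3$, no $C_j$ is a single qubit and the weight-$1$ logical guaranteed by distance $1$ cannot be a $\bar Z_j$-coset element (either case would give $\bar Z_C$ weight $1$); so that weight-$1$ logical is a single-qubit Pauli $P$ of $\bar X_j$- or $\bar Y_j$-type, on some qubit $q^{\ast}$. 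Moreover each $C_j$ must be genuinely $1$-fold, and since every two-qubit stabilizer code (being local-Clifford equivalent to $\langle ZZ\rangle$ or $\langle ZI\rangle$) admits a product logical basis, this forces $n_j\geq 3$. Being stabilizer states, $\ket{\psi_{0j}},\ket{\psi_{1j}}$ have flat Schmidt spectra across every cut. Finally, as shown in the proof of Theorem~\ref{Toffoli}, a transversal logical Toffoli on $C$ restricts to a transversal logical Toffoli on each $C_j$, so it remains only to contradict transversal Toffoli on one such $C_j$.

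The key step is Pauli conjugation. Write the transversal logical Toffoli on $C_j^{\otimes 3}$ as $\bigotimes_{i=1}^{n_j}V_i$ (one $V_i$ per subsystem, three qubits each), with block $A$ in a control slot. Then $\bar L^{(A)}=P$ on qubit $q^{\ast}$ of block $A$ lies entirely in subsystem $q^{\ast}=\{A_{q^{\ast}},B_{q^{\ast}},C_{q^{\ast}}\}$, so $(\bigotimes_i V_i)\,\bar L^{(A)}\,(\bigotimes_i V_i)^{\dagger}=V_{q^{\ast}}\bar L^{(A)}V_{q^{\ast}}^{\dagger}$ is again confined to subsystem $q^{\ast}$; on the other hand, by the $CCX$ conjugation rule ($X$- and $Y$-type control operators both pick up a $\overline{CX}$ between the other two blocks), it equals $\bar L^{(A)}$ times the logical $\overline{CX}$ between blocks $B$ and $C$. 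Hence $C_j$ admits a logical $\overline{CX}$ supported on at most three physical qubits, one per block.

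The main obstacle is showing this is impossible for a genuinely $1$-fold stabilizer $C_j$, and I would finish by a dichotomy on the reduced states of $\ket{\psi_{0j}},\ket{\psi_{1j}}$ on the $n_j-1$ qubits other than $q^{\ast}$. If they coincide, both codewords are purifications of a single state by the single qubit $q^{\ast}$, hence related by a unitary on $q^{\ast}$; a short computation then produces two orthogonal product states spanning $C_j$, contradicting $1$-foldness. If they differ, then some representative of the $\bar Z_j$-coset is supported away from $q^{\ast}$, and taking it on block $B$ it commutes with the confined $\overline{CX}$ (disjoint supports); conjugation then yields $\bar Z_j^{(B)}=\bar Z_j^{(B)}\bar Z_j^{(C)}$ on the code, forcing the nontrivial logical $\bar Z_j^{(C)}$ to act trivially, a contradiction. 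Either way transversal Toffoli on $C_j$, and hence on $C$, is impossible; with Theorem~\ref{Toffoli} this proves no additive QECC implements transversal Toffoli. I expect the delicate bookkeeping to be the $1$-foldness claims (that subcodes are genuinely non-decomposable and therefore $n_j\geq 3$, and that the equal-reduced-states case really yields an orthonormal product basis).
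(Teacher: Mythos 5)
Your overall plan matches the paper's: reduce via Theorem~\ref{Toffoli} to an additive maximally redundant code, use the stabilizer structure to show that the inner-code $\bar Z$ has weight at least $d\geq 3$ while the distance-$1$ inner code supplies a weight-$1$ logical that anticommutes with $\bar Z$, then conjugate by the transversal logical Toffoli to confine a logical $\overline{CX}$ to the weight-$1$ subsystem. The divergence is in the final step. The paper closes the argument by one more commutator: $[\overline{CX}_L, \bar Z_L(\text{target})]$ is a Pauli supported on the same confined subsystem (one qubit per block), yet logically it equals $\bar Z_L$ on the control block; since the block-$1$ and block-$3$ factors of that Pauli must act as scalars on the codespace, the control-block factor is a weight-$1$ representative of $\bar Z_L$, contradicting $\mathrm{wt}(\bar Z_L)\geq d\geq 3$. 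This works whether or not $\bar Z_L$ touches the distinguished qubit $q^{\ast}$, so no dichotomy is needed.

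Your dichotomy introduces a genuine gap. In Case~1 you derive that $C_j$ is spanned by two orthogonal product states across the cut $\{q^\ast\}\,\big|\,\{\text{rest}\}$ and conclude this ``contradicts $1$-foldness.'' But Definition~\ref{Definition8} defines $r$-foldness with respect to the \emph{fixed} logical computational basis $\{\ket{\psi_{0j}},\ket{\psi_{1j}}\}$ inherited from $\ket{\tilde 0},\ket{\tilde 1}$, not with respect to an arbitrary basis of the two-dimensional subspace. In your Case~1 the fixed basis vectors remain maximally entangled across that cut (you even use the flat-Schmidt-spectrum fact to reach the conclusion), so they are still indecomposable and $C_j$ is still $1$-fold. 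Exhibiting a \emph{different} basis of $C_j$ that factors does not contradict anything. The same confusion underlies your claim that $n_j\geq 3$ ``because every two-qubit stabilizer code admits a product logical basis'': admitting \emph{some} product basis does not make the fixed basis a product. (The bound $n_j\geq d\geq 3$ is actually true, but for a different reason: $\bar Z_j$ is supported entirely inside subcode $j$ and has weight $\geq d$.) Your Case~2 is essentially the paper's closing commutator argument, modulo a control/target mislabeling ($X$ on a control of $CCX$ picks up $CX$ between the other control and the target, so the disjoint-support trick should be run with $\bar Z$ on the target block). Dropping the dichotomy and running the paper's commutator $[\overline{CX}_L,\bar Z_L]$ in all cases both repairs the gap and shortens the proof.
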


For proof, see Appendix \ref{Appendix2}.  Note that this also follows from the arguments in Appendix \ref{Appendix3}.  Our central result follows.

\begin{Corollary}
If a QECC is not a non-additive maximally redundant code, then it cannot implement the Toffoli gate transversally.
\end{Corollary}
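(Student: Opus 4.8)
The plan is to obtain this corollary purely by combining the two preceding results, Theorem~\ref{Toffoli} and Corollary~\ref{Additive}, via contraposition; no new machinery is needed. Suppose, toward the contrapositive, that $C$ is a binary QECC that implements the Toffoli gate transversally. The first step is to invoke Theorem~\ref{Toffoli}: since $C$ admits a transversal logical Toffoli, it cannot be a code that fails to be maximally redundant, so $C$ must in fact be a maximally redundant code in the sense of Definition~\ref{Definition8} (an $r$-fold $[[n,1,d]]$ code with $r=d\geq 3$ whose subcodes each have distance $1$).

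The second step is to apply Corollary~\ref{Additive}, which rules out transversal Toffoli for every additive (stabilizer) code. Since $C$ does implement transversal Toffoli, $C$ cannot be additive. Combining the two steps, $C$ is simultaneously maximally redundant and non-additive, i.e. a \emph{non-additive maximally redundant code}. Thus ``$C$ implements Toffoli transversally'' implies ``$C$ is a non-additive maximally redundant code'', and taking the contrapositive gives exactly the claimed statement: if a QECC is not a non-additive maximally redundant code, it cannot implement the Toffoli gate transversally.

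There is essentially no technical obstacle at this stage — all the genuine work lies in Theorem~\ref{Toffoli} (the information-theoretic argument routing through the coding QHE scheme and Nayak's bound, together with the structural analysis of $r$-fold codes that forces $r=d$ and distance-$1$ subcodes) and in Corollary~\ref{Additive} (the stabilizer-group argument of Appendix~\ref{Appendix2}). The only point worth verifying is the mutual consistency of the two classifications: one must check that the class of codes left unruled by Theorem~\ref{Toffoli} is precisely the maximally redundant class of Definition~\ref{Definition8}, and that Corollary~\ref{Additive} is applied under the same conventions fixed in the Preliminaries (single encoded logical qubit without loss of generality, transversality without coordinate permutations, logical operators independent of gauge qubits for subsystem codes). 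Granting that both appeals are made within this common framework, the implication chains without further hypotheses and the corollary follows immediately.
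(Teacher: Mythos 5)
Your proposal is correct and matches the paper's (implicit) argument exactly: the paper states this corollary immediately after Theorem~\ref{Toffoli} and Corollary~\ref{Additive} with the single remark ``Our central result follows,'' and the intended reasoning is precisely the contrapositive combination you spell out.
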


Finally, note that by concatenating an $[[n,1,d]]$ $d$-fold code with itself, the code remains $d$-fold while the distance must increase to at least $d^2$.  Furthermore, if such a code implements Toffoli strongly transversally, then so does its concatenation with itself.  As a result, we observe the following.

\begin{Corollary}
No QECC can implement strongly transversal Toffoli.
\end{Corollary}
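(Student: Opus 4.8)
The plan is to bootstrap Theorem~\ref{Toffoli} by self-concatenation: strong transversality (unlike bare transversality) survives concatenating a code with itself, while the distance grows quadratically and the ``fold number'' does not, which pushes any would-be code out of the maximally redundant regime that Theorem~\ref{Toffoli} leaves open.

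First I would dispose of every case but one. Suppose for contradiction that some $[[n,1,d]]$ QECC $C$ realizes a strongly transversal logical Toffoli, i.e.\ $\mathrm{Toff}_L=\mathrm{Toff}^{\otimes n}$ on $C^{\otimes 3}$. Strong transversality is in particular transversality, so Theorem~\ref{Toffoli} forces $C$ to be maximally redundant: $r$-fold with $r=d\ge 3$, logical states $\ket{i}_L=\bigotimes_{j=1}^{d}\ket{\psi_{ij}}$, and every subcode of distance $1$.

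Next I would pass to the self-concatenation $C'=C\circ C$: re-encode each of the $n$ physical qubits of the outer copy of $C$ into a fresh inner copy of $C$, so $C'$ is an $[[n^2,1,D]]$ code with $D\ge d^2$ by the usual concatenation bound on distances. Two facts then need checking. \textbf{(a) $C'$ also realizes a strongly transversal Toffoli.} Partition each copy of $C'$ into $n$ inner blocks of $n$ qubits; then $\mathrm{Toff}^{\otimes n^2}$ applied across the three copies factors, block by block, as $\mathrm{Toff}^{\otimes n}$ on three copies of the inner code, which by strong transversality of the inner $C$ enacts a logical Toffoli on each of the $n$ physical qubits of the outer copies, and running over all $n$ blocks this is $\mathrm{Toff}^{\otimes n}$ on the outer $C^{\otimes 3}$, which by strong transversality of the outer $C$ is the logical Toffoli on $C'^{\otimes 3}$. \textbf{(b) $C'$ is still $d$-fold.} Writing $\ket{\psi_{ij}}$ on $n_j$ outer qubits with $\sum_j n_j=n$, and letting $V$ be the inner encoding isometry, the logical states of $C'$ are $\bigotimes_{j=1}^{d} V^{\otimes n_j}\ket{\psi_{ij}}$, a product of $d$ pieces; the pairs $V^{\otimes n_j}\ket{\psi_{0j}},V^{\otimes n_j}\ket{\psi_{1j}}$ remain orthogonal since $V$ is an isometry, and each piece $V^{\otimes n_j}\ket{\psi_{ij}}$ remains non-decomposable — if it split across some bipartition, then examining the reduced state on a single inner block (which equals $V\rho_\beta V^\dagger$, rank $2$ and supported on the $2$-dimensional code $C$) forces some physical qubit of $C$ to be fixed in a pure state across all codewords of $C$, contradicting the perpendicularity of the subcodes demanded of a $d$-fold (hence maximally redundant) code.

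Finishing: by (a) and (b), $C'$ is a $d$-fold $[[n^2,1,D]]$ code with $d<d^2\le D$, so it is \emph{not} maximally redundant, yet it implements transversal Toffoli — directly contradicting Theorem~\ref{Toffoli} (equivalently, feeding $C'$ into the scheme of Figure~\ref{qhe} and invoking Nayak's bound as in Proposition~\ref{Impossibility}). Hence no QECC implements strongly transversal Toffoli. I expect step (b) to be the main obstacle: the bookkeeping in (a) is routine, but carefully arguing that the $d$ inherited pieces do not further factor after the inner encoding — so that the fold number genuinely stays at $d$ rather than inflating toward the new, larger distance — requires the small structural lemma that $C$ has no codeword-constant tensor factor. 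It is also worth explicitly confirming that $C'$ satisfies the hypotheses of the coding scheme (it does: a pure-state $r$-fold QECC with $r<D$), so that the quantitative obstruction of the previous sections transfers verbatim.
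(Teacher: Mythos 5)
Your proposal is correct and takes essentially the same route as the paper: self-concatenate $C$, note that strong transversality of Toffoli passes to $C\circ C$ while the distance grows to at least $d^2$ and the fold number stays at $d$, then invoke Theorem~\ref{Toffoli} on the non--maximally-redundant concatenated code. You have merely filled in the two facts (preservation of strong transversality and preservation of fold number under concatenation) that the paper asserts without detail; your sketch of (b) — that a split of $V^{\otimes n_j}\ket{\psi_{ij}}$ across a bipartition cutting an inner block would force a pure, codeword-independent factor of $C$, contradicting non-decomposability and orthogonality of the $\ket{\psi_{ij}}$ — is the right lemma to supply there.
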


\section{Discussion}

Do there exist non-additive maximally redundant codes that can then implement Toffoli transversally?  One can essentially think of these as QECCs formed by concatenating an outer repetition code with a distance $1$ inner code that is not a stabilizer subspace.  Intuitively, since the inner code is not quantum error-correcting, the code only ``spreads out the information in one basis''.  More precisely, the inner code only satisfies the diagonal QECC criterion.  While this is a less restrictive condition, it still must be ``complementary'' to the outer code, and this allows us to argue impossibility in the additive case. Unfortunately by comparison, the structure of general non-additive codes is less well-understood -- in particular, we know of no examples of such a code.  We expect that {\em no QECC can implement Toffoli transversally}, and view this exception as a consequence of the lack of structure on general non-additive codes.  We hope to resolve this exception in upcoming work.

The QHE scheme we have detailed is non-leveled and compact, but highly inefficient.  An immediate question would be to refine the security proof, which uses too strong a security demand.  It would be most interesting to see if a modified approach can achieve \emph{efficient} ITS-QHE for transversal gate sets of general quantum error-correcting codes, where the size of the encoding is some fixed polynomial of the input length.  There are certain quantitative properties of ``nonlocality'' in QECCs (see e.g. \cite{Arnaud:2013}, \cite{Scott:2004}) that might be helpful in such an endeavor.   Following the outline of \cite{Ouyang:2015}, we could also expect to extend a scheme built on a code with desirable transversal gates to accommodate a constant number of non-transversal gates.  Just as one might tailor a QECC for a specific algorithm that makes heavy use of its transversal gate set, one might also tailor an ITS-QHE scheme to homomorphically implement that algorithm.  Furthermore, it would be of theoretical interest to find a protocol matching the lower bound implicit in Proposition \ref{Impossibility}.

Another interesting open question is to consider leveled ITS-QHE schemes: allow the client some preprocessing to scale with the size of the circuit.  Can this relaxation allow more efficient or universal schemes for polynomial sized circuits, mirroring the computational security case?  A first step might be to try to apply the techniques of instantaneous nonlocal computation \cite{Speelman:2015} that proved invaluable in the computationally secure scheme.  Moreover, through gauge-fixing, we have ways of converting between codes that together form a universal transversal gate set.  Its not clear how to implement such a strategy, since the noisy embedding and non-interactivity present barriers to measuring syndromes, but these elements taken together might be useful in extending the current scheme.

Finally, one could ask if there is a correspondence between transversal gates for quantum codes and nontrivial ITS homomorphically-implementable gate sets, based on the ``richness'' of the function classes they can realize.  In particular, \cite{Eastin:2009} asked: what is the maximum size of finite group that can be implemented logically and transversally?  Indeed, since the Clifford group on $p$-qubits is of size at most $2^{2p^2 + 3p}$ \cite{Ozols:2008}, one could reasonably expect to efficiently implement the Clifford gates homomorphically with information theoretic security, as was done in \cite{Ouyang:2015}.  We hope that our arguments might extend past classical reversible circuit classes to address this question, although it is unclear how to generalize Nayak's bound to apply to these general finite subgroups of the unitary group.

\section{Acknowledgments} 

The authors would like to thank Cupjin Huang and Fang Zhang for useful discussions, in particular concerning the security proof.  They are grateful to Audra McMillan, Anne Broadbent, and Zhengfeng Ji for their comments on an earlier draft of this paper. In particular, we thank Anne for bringing~\cite{Baumeler:2013} to our attention and pointing out that Proposition~\ref{concurrent} follows from the result of that paper and an argument similar to the classical case.  They also thank Fernando Pastawksi for pointing us to Lemma \ref{CliffordLemma}.  This research was supported in part by NSF Awards 1216729, 1318070, and 1526928.
\newpage
\bibliographystyle{abbrv}
\bibliography{bibliography}
\newpage
\appendix

\section{A no-go result for ITS-QFHE} \label{AppendixITSQHE}

We now set out to show that efficient ITS-QFHE is impossible.  Define for any $x \in \{0,1\}^n$ the state
\[\ket{s_x}^{\mathcal{CM}} = QHE.Enc(x).\]
Note that for any $\ket{s_x}\bra{s_x}^{\mathcal{M}}$, we can always purify to a system of size at most $2|\mathcal{M}|$.  So without loss of generality, we may assume that $|\mathcal{C}| = m$, the size of the message sent from Client to Server.  Next, by information theoretic security, the state of the encryption on subsystem $\mathcal{M}$ must be almost independent of $x$.  Formally, 
\[ \|\Tr_{\mathcal{C}}(\ket{s_x}\bra{s_x}) - \Tr_{\mathcal{C}}(\ket{s_{0^n}}\bra{s_{0^n}}) \|_1 \leq \epsilon \] 
for $\epsilon$ the security of the scheme. Equivalently, there exists a $V_x^\mathcal{C}$ so that, defining $s_x' = (V_x^\mathcal{C} \otimes I^\mathcal{M}) \ket{s_{0^n}}\bra{s_{0^n}} (V_x^\mathcal{C} \otimes I^\mathcal{M})^\dag$ and $s_x = \ket{s_x}\bra{s_x}$,
\[\|s_x' - s_x\|_1 \leq \epsilon.\]
Furthermore, for any $f \in \mathcal{F}$, we have by the homomorphic property that, abbreviating $QHE.Eval_f$ as $f_{ev}(\cdot)$ and $QHE.Dec(\cdot)$ as $D(\cdot)$,
\[\left[I^{\mathcal{C}} \otimes f_{ev}^\mathcal{M}(s_x^{\mathcal{CM}}) \right]^\mathcal{CM'} \eqqcolon \eta_{f,x},\]
\[D^\mathcal{CM'}(\eta_{f,x}) = f(x).\]
But now, defining $\eta_{f,x}'$ by replacing $s_x$ with $s_x'$ in the definition of $\eta_{f,x}$, by contractivity of trace distance we also have
\[\Pr[D^\mathcal{CM'}(\eta_{f,x}') \neq f(x)] \leq \epsilon. \]
To elucidate the underlying QRAC, define the mapping
\[f \mapsto \eta_{f,0^n},\]
and note that $(V_x^\mathcal{C} \otimes I^\mathcal{M'})\eta_{f,0^n}(V_x^\mathcal{C} \otimes I^\mathcal{M'})^\dag = \eta'_{f,x}$.  So let $D^\mathcal{CM'}\left[(V_x^\mathcal{C} \otimes I^\mathcal{M'})(\cdot) \right]$ denote the query for index $x$ of $f$, thinking of $f$ as a $2^n$ length bit string, with the $x$th bit defined as $f(x)$.  Then we have a $(2^n, m + m',1-\epsilon)$-QRAC for the set of all Boolean functions, where $m+m'$ is the communication cost of the protocol.  We now recall a well-known bound on the efficiency of QRACs \cite{Nayak:1999}.

\begin{Theorem}[Nayak's Bound]
If there exists an $(n,m,p)$-QRAC, then for $H(\cdot)$ the binary entropy function,
\[ m \geq n(1 - H(p)). \]
\end{Theorem}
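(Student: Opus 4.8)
The plan is the standard entropy-telescoping argument: encode a uniformly random $n$-bit string, and show that every coordinate recoverable with success probability at least $p$ forces the von Neumann entropy $S(\rho)=-\Tr(\rho\log\rho)$ of the (necessarily mixed) encoding to grow by $1-H(p)$; since an $m$-qubit state has entropy at most $m$, this gives the bound. Assume $p\ge\tfrac12$, which is the regime of interest. First I would model the $i$-th query $M_i$ as an ordinary two-outcome POVM $\{M_i^0,M_i^1\}$ acting on $\rho_x$, absorbing any ancilla and classical post-processing, so that $\Tr(M_i^{x_i}\rho_x)\ge p$ for all $x\in\{0,1\}^n$ and all $i\in[n]$.

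Next, with $X=X_1\cdots X_n$ uniform on $\{0,1\}^n$, for a partial assignment $b_1\cdots b_k$ let $\rho_{b_1\cdots b_k}$ denote the average of $\rho_x$ over all $x$ extending it; so $\rho_\emptyset$ is the overall average $\rho$, an $m$-qubit state with $S(\rho_\emptyset)\le m$, while $\rho_{b_1\cdots b_n}$ is the single codeword state $\rho_x$ with $x=b_1\cdots b_n$. Averaging the guarantee $\Tr(M_k^{x_k}\rho_x)\ge p$ over the $x$'s consistent with $b_1\cdots b_{k-1}$ and with $x_k=c$ gives $\Tr(M_k^{c}\,\rho_{b_1\cdots b_{k-1}c})\ge p$ for each $c\in\{0,1\}$.

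The crux is an entropy-increase lemma: if $\sigma=\tfrac12(\sigma_0+\sigma_1)$ and a two-outcome POVM $\{N^0,N^1\}$ satisfies $\tfrac12\Tr(N^0\sigma_0)+\tfrac12\Tr(N^1\sigma_1)\ge p$, then $S(\sigma)\ge\tfrac12 S(\sigma_0)+\tfrac12 S(\sigma_1)+1-H(p)$. I would prove this by observing that $S(\sigma)-\tfrac12 S(\sigma_0)-\tfrac12 S(\sigma_1)$ is exactly the Holevo information $\chi$ of the binary ensemble $\{(\tfrac12,\sigma_0),(\tfrac12,\sigma_1)\}$; Holevo's bound gives $\chi\ge I_{\mathrm{acc}}$, and $I_{\mathrm{acc}}$ is at least the mutual information between the ensemble label $B$ and the outcome of $\{N^0,N^1\}$. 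Taking the estimator that outputs the measurement outcome, Fano's inequality for a two-element alphabet gives $H(B\mid\text{outcome})\le H(P_e)\le H(1-p)=H(p)$, where $P_e\le 1-p\le\tfrac12$ (using $p\ge\tfrac12$) and $H$ is increasing on $[0,\tfrac12]$; hence $I_{\mathrm{acc}}\ge 1-H(B\mid\text{outcome})\ge 1-H(p)$.

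Applying the lemma at each node $b_1\cdots b_{k-1}$ with $(\sigma_0,\sigma_1)=(\rho_{b_1\cdots b_{k-1}0},\rho_{b_1\cdots b_{k-1}1})$ and $\{N^0,N^1\}=\{M_k^0,M_k^1\}$, then averaging over $b_1\cdots b_{k-1}$ with weight $2^{-(k-1)}$, yields $A_{k-1}\ge A_k+(1-H(p))$ for $A_k\coloneqq 2^{-k}\sum_{b_1\cdots b_k}S(\rho_{b_1\cdots b_k})$. Telescoping over $k=1,\ldots,n$ gives $m\ge S(\rho)=A_0\ge A_n+n(1-H(p))\ge n(1-H(p))$, since $A_n$ is an average of nonnegative entropies. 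The main obstacle is the entropy-increase lemma — extracting a quantitative entropy gain from a purely operational decoding guarantee; the remainder is bookkeeping, though one must take care that the relevant quantity at each node is the Holevo $\chi$ of the conditional two-element ensemble (not of the original $2^n$-element ensemble) and that Fano is applied with a binary alphabet, so that its alphabet-size correction term vanishes.
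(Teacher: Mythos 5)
The paper does not prove this theorem; it simply cites Nayak's 1999 paper as the source, so there is no in-paper proof to compare against. Your reconstruction is correct and is essentially the standard (Nayak's original) telescoping-entropy argument: identify $S(\sigma)-\tfrac12S(\sigma_0)-\tfrac12S(\sigma_1)$ with the Holevo $\chi$ of the two-element conditional ensemble at each node of the binary tree, lower-bound the accessible information by the mutual information of the given measurement, apply Fano with a binary alphabet so the $\log(|\mathcal{B}|-1)$ term vanishes, and telescope $A_{k-1}\ge A_k+1-H(p)$ down to $m\ge S(\rho_\emptyset)=A_0\ge n(1-H(p))$. The one restriction you flag, $p\ge\tfrac12$ so that $H(P_e)\le H(1-p)=H(p)$, is harmless: for $p<\tfrac12$ one may relabel the measurement outcomes, and in the paper's application $p=1-\epsilon$ with $\epsilon\to 0$, so this regime is all that is used.
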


So it must be that the total communication cost of the protocol $(m+m') \geq 2^n(1 - H(\epsilon))$.  For security, allowing $\epsilon \rightarrow 0$ and noting that $H(\epsilon) \rightarrow 0$, we see that the communication cost $(m+m') = \Omega(2^n)$.  Thus, either the size of the encoding or the evaluated ciphertext must be exponentially long in the input, precluding efficiency.  In short,

\begin{Proposition}\label{Impossibility}
The communication cost of ITS-QFHE must be exponential in the size of the input.
\end{Proposition}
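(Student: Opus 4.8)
The plan is to realize any ITS-QFHE scheme as a quantum random access code (QRAC) for the set of all Boolean functions on $n$ bits, and then invoke Nayak's bound. First I would fix the input length $n$ and, for each $x \in \{0,1\}^n$, set $\ket{s_x}^{\mathcal{CM}} = QHE.Enc(x)$; since any mixed state on $\mathcal{M}$ can be purified using a register no larger than $2|\mathcal{M}|$, I may assume $|\mathcal{C}| = m$ without loss of generality. The key conceptual move is that the ``index'' of the QRAC is not $x$ but rather the entire truth table of a function $f \in \mathcal{F}$, viewed as a $2^n$-bit string whose $x$th entry is $f(x)$; the encoding of this truth table is the evaluated ciphertext $\eta_{f,0^n} = (I^{\mathcal{C}} \otimes QHE.Eval_f)(s_{0^n})$, which has size $m + m'$.

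Next I would use $\epsilon$-information-theoretic security to argue that $\Tr_{\mathcal{C}}(\ket{s_x}\bra{s_x})$ is $\epsilon$-close in trace distance to $\Tr_{\mathcal{C}}(\ket{s_{0^n}}\bra{s_{0^n}})$, so by Uhlmann's theorem there is a unitary $V_x^{\mathcal{C}}$ acting only on Client's register with $\|(V_x^{\mathcal{C}} \otimes I^{\mathcal{M}})\ket{s_{0^n}}\bra{s_{0^n}}(V_x^{\mathcal{C}} \otimes I^{\mathcal{M}})^\dag - \ket{s_x}\bra{s_x}\|_1 \leq \epsilon$. The query for index $x$ of the QRAC is then: apply $V_x^{\mathcal{C}}$ to the $\mathcal{C}$ portion of $\eta_{f,0^n}$, then run $QHE.Dec$. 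Because $QHE.Eval_f$ and $QHE.Dec$ act as quantum channels and trace distance is contractive under channels, the output differs from $f(x)$ with probability at most $\epsilon$; hence this is a $(2^n, m+m', 1-\epsilon)$-QRAC. Applying Nayak's bound gives $m + m' \geq 2^n(1 - H(\epsilon))$, and letting $\epsilon \to 0$ so that $H(\epsilon) \to 0$ yields $m + m' = \Omega(2^n)$, so at least one of the two messages is exponentially long, contradicting compactness/efficiency.

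The main obstacle I anticipate is the careful bookkeeping around where the secret randomness lives and ensuring the QRAC query is a genuine physical procedure. Specifically, the decoder $QHE.Dec$ must act on the joint $\mathcal{CM}'$ register, but the QRAC framework wants the decoder to see only the ``message'' $\eta_{f,0^n}$; the resolution is to observe that the purification assumption lets us bundle Client's workspace $\mathcal{C}$ into the QRAC state itself, and the correction unitaries $V_x^{\mathcal{C}}$ are exactly the input-dependent rotations that make the fixed encoding $\eta_{f,0^n}$ behave like $\eta'_{f,x}$. A secondary subtlety is that Nayak's bound as usually stated requires exact recovery probability bounds, so I must confirm it applies (or is robust) for the $1-\epsilon$ success probability with $\epsilon$ small — this is exactly the entropy-loss form $m \geq n(1 - H(p))$ quoted above, which already accommodates imperfect success. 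Once these points are pinned down, the argument is essentially a chain of triangle inequalities and one invocation of the QRAC lower bound.
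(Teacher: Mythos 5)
Your proposal is correct and follows essentially the same route as the paper: purify the encryption, invoke $\epsilon$-ITS security to obtain client-side correction unitaries $V_x^{\mathcal{C}}$ relating the encryptions of different inputs to a fixed reference encryption, view the evaluated ciphertext $\eta_{f,0^n}$ as a quantum random access encoding of the truth table of $f$, and apply Nayak's bound to get $m+m' \geq 2^n(1-H(\epsilon))$. The subtleties you flag (bundling $\mathcal{C}$ into the QRAC state and the robustness of Nayak's bound to success probability $1-\epsilon$) are exactly the points the paper handles in the same way.
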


\section{Proof of Lemma \ref{mixed}}\label{Appendix1}

\begin{proof}
Expanding in terms of outer products,
\begin{align*}
\Tr\left((\rho \otimes I)(I \otimes \sigma)\right) &= \Tr\Bigg{(} \Big{(}\sum\limits_{i,i'} \sum\limits_{j,j'} \sum\limits_k a_{i,i',j,j'} \ket{i}\bra{i}^{\bar{\Delta}_1} \otimes \ket{j}\bra{j'}^\Delta \otimes \ket{k}\bra{k}^{\bar{\Delta}_2}\Big{)} \cdot \\ 
& \hspace{12 mm} \Big{(}\sum\limits_{\ell} \sum\limits_{m,m'} \sum\limits_{n,n'} b_{m,m',n,n'} \ket{\ell}\bra{\ell}^{\bar{\Delta}_1} \otimes \ket{m}\bra{m'}^\Delta \otimes \ket{n}\bra{n'}^{\bar{\Delta}_2}\Big{)}\Bigg{)} \\
& = \Tr\Bigg{(} \sum\limits_{i,i'} \sum\limits_{n,n'} \sum\limits_{j,m'} \left( \sum\limits_{j'} a_{i,i',j,j'} b_{j',m',n,n'} \right) \ket{i}\bra{i} \otimes \ket{j}\bra{m'} \otimes \ket{n}\bra{n'}\Bigg{)} \\
&= \sum\limits_i \sum\limits_n \sum\limits_{j,j'}\left(a_{i,i,j,j'}b_{j',j,n,n}\right).
\end{align*}
On the other hand, we have
\begin{align*}
\Tr\left( \Tr_{\bar{\Delta}_1}(\rho) \Tr_{\bar{\Delta}_2}(\sigma)\right) & = \Tr\Bigg{(} \Big{(}\sum\limits_i \sum\limits_{j,j'} a_{i,i,j,j'} \ket{j}\bra{j'}\Big{)} \Big{(}\sum\limits_n \sum\limits_{m,m'} b_{m,m',n,n} \ket{m}\bra{m'}\Big{)}\Bigg{)}\\
&= \Tr\Bigg{(}\sum\limits_i \sum\limits_n \sum\limits_{j,j'}\Big{(} \sum\limits_{j'} a_{i,i,j,j'}b_{j',m',n,n}\Big{)}\ket{j}\bra{m'}\Bigg{)}\\
&= \sum\limits_i \sum\limits_n \sum\limits_{j,j'}\left(a_{i,i,j,j'}b_{j',j,n,n}\right)
\end{align*}
as claimed.
\end{proof}

\section{Proof of Corollary \ref{Additive}}\label{Appendix2}

\begin{proof}
By Theorem \ref{Toffoli}, it suffices to consider maximally redundant codes.  So suppose, for the sake of contradiction, that an $[[n,1,d]]$ additive $d$-fold  code could implement Toffoli transversally.  Let $[ \cdot, \cdot ]$ denote the group commutator.  We denote by $\bar{\cdot}$ states and operations acting on the subcodes, and $\tilde{\cdot}$ those on the full code.  We will assume that each subcode is the same, e.g. $\ket{\tilde{i}} = \ket{\bar{i}}^{\otimes d}$, so that we can speak directly about the inner and outer codes. The general argument follows similarly.

Since the code is additive, the code distance is the minimal weight logical Pauli operator acting on the code. For any $\bar{Z}_L$, by multiplicativity of the inner product over tensor products,

\begin{align*}
\frac{1}{2} \left\langle\ket{\tilde{0}} + \ket{\tilde{1}}\left|\bar{Z}_L\right|\ket{\tilde{0}} - \ket{\tilde{1}}\right\rangle &= \frac{1}{2} \left(  \langle \tilde{0}| \bar{Z}_L| \tilde{0} \rangle - \langle \tilde{0}| \bar{Z}_L| \tilde{1} \rangle + \langle \tilde{0}| \bar{Z}_L| \tilde{0} \rangle -\langle \tilde{1}| \bar{Z}_L| \tilde{1} \rangle \right) \\
&= \frac{1}{2}\left(\langle \bar{0}|\bar{0} \rangle^{\frac{n}{d}} + \langle \bar{1}|\bar{1} \rangle^{\frac{n}{d}}\right) \neq 0.
\end{align*}
Since the outer code has distance $d$, it follows from the QECC criterion that $\bar{Z}_L$ must have weight at least $d$.  Then $\bar{X}_L$ must have weight $1$, since the underlying inner code has distance $1$ by assumption.  Because the outer classical repetition code factors as a tensor product, transversal $\widetilde{\text{Toff}}_L$ on the outer code must restrict (up to a global phase) to transversal $\overline{\text{Toff}}_L$ on the inner code. Since we're now working with multiqubit gates, let $G_L(i)$ denote the logical gate for $G$ acting on the $i$th code block.  We can compute directly,

\[ [\overline{\text{Toff}}_L(1,2,3), \bar{X}_L(1)] = \overline{CX}_L(2,3).\]

Furthermore, because $\overline{\text{Toff}}_L$ and $\bar{X}_L$ are transversal, it follows that $\overline{CX}_L$ has a representative that is also transversal and is supported on the subsystems that support $\bar{X}_L$.  By a similar argument

\[ [\overline{CX}_L(1,2), \bar{Z}_L(1)] = \bar{Z}_L(2) \]

so that $\bar{Z}_L$ must also be contained in the subsystems supporting $\overline{CX}_L$, and in turn $\bar{X}_L$.  As we have already observed, the minimal weight of any representative of $\bar{Z}_L$ must be at least $d$, a contradiction as $\bar{X}_L$ has a representative of weight $1$.

\end{proof}

\section{An alternate proof for stabilizer codes} \label{Appendix3}

Here we offer an alternate proof limiting universal transversal reversible computation for the subclass of stabilizer codes.  The arguments here are based off of results from \cite{Bravyi:2013} and \cite{Pastawski:2014}, which we reproduce for completeness.

\begin{Definition} \normalfont
The \emph{Clifford hierarchy} $\mathcal{C}$ is a sequence of gate sets $\{\mathcal{C}_k\}_{k \geq 1}$ defined recursively by $\mathcal{C}_k = \{U: U\mathcal{C}_1U^\dag \subseteq \mathcal{C}_{k-1}\}$, where we define $\mathcal{C}_1$ to be the Pauli group.  
\end{Definition}

Note that $\mathcal{C}_2$ is the Clifford group, and $\mathcal{C}_k$ fails to be a group for $k>2$.  Further note that reversible circuits saturate the Clifford hierarchy (and in fact can lie outside it entirely) by the gate $C^kX$, the $k$-controlled bit-flip gate, which lies in $\mathcal{C}_{k+1}$.  Toffoli is simply $C^2X$, and so lies in the third level of the Clifford hierarchy.  We next recall the stabilizer cleaning lemma, which can be found in \cite{Bravyi:2013}.

\begin{Lemma} \normalfont
Let $S$ be a stabilizer code, and let $R$ be any subset of physical qubits of the code such that any logical operator supported on $R$ acts trivially on $S$.  Then, for any logical operator $U_L$, there exists a representative of $U_L$ supported on $R^c$.
\end{Lemma}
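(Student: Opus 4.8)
The plan is to pass to the symplectic $\mathbb{F}_2$ description of the Pauli group, where the cleaning lemma becomes a short exercise in bilinear-form duality. Quotienting the $n$-qubit Pauli group by phases identifies it with $\mathbb{F}_2^{2n}$ carrying the symplectic form $\omega$ that records commutation, and crucially $\omega$ is an orthogonal direct sum of the single-qubit forms. In this picture the stabilizer group $S$ becomes an isotropic subspace $\mathcal{S}$; its symplectic complement $\mathcal{S}^{\perp}$ is the image of the normalizer $N(S)$ (the operators with well-defined logical action), and logical Pauli operators are the cosets of $\mathcal{S}$ inside $\mathcal{S}^{\perp}$. For the region $R$, write $\mathcal{V}_R$ (resp.\ $\mathcal{V}_{R^c}$) for the coordinate subspace of Paulis supported on $R$ (resp.\ $R^c$); since $\omega$ decomposes over qubits and is nondegenerate, $\mathcal{V}_R^{\perp}=\mathcal{V}_{R^c}$.

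The next step is to restate the hypothesis and the goal in these terms. A logical operator supported on $R$ is an element of $\mathcal{S}^{\perp}\cap\mathcal{V}_R$, and it acts as the logical identity precisely when it lies in $\mathcal{S}$; so the hypothesis is exactly $\mathcal{S}^{\perp}\cap\mathcal{V}_R\subseteq\mathcal{S}$. On the other side, a given $g\in N(S)$ has a representative supported on $R^c$ iff one can cancel its $R$-part by multiplying by a stabilizer, i.e.\ iff $g\in\mathcal{S}+\mathcal{V}_{R^c}$; so it suffices to prove the inclusion $\mathcal{S}^{\perp}\subseteq\mathcal{S}+\mathcal{V}_{R^c}$. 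The core of the argument is then to apply $(\cdot)^{\perp}$ to the hypothesis: using that $\perp$ reverses inclusions, that $(A\cap B)^{\perp}=A^{\perp}+B^{\perp}$, that $(\mathcal{S}^{\perp})^{\perp}=\mathcal{S}$, and that $\mathcal{V}_R^{\perp}=\mathcal{V}_{R^c}$, the hypothesis $\mathcal{S}^{\perp}\cap\mathcal{V}_R\subseteq\mathcal{S}$ dualizes to $\mathcal{S}^{\perp}\subseteq(\mathcal{S}^{\perp}\cap\mathcal{V}_R)^{\perp}=\mathcal{S}+\mathcal{V}_{R^c}$, which is exactly what is wanted.

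Finally I would translate back: any $g\in N(S)$ equals $s\cdot h$ up to phase for some $s\in S$ and some Pauli $h$ supported on $R^c$, and since $g$ commutes with $s\in S$ the operator $h=gs$ is Hermitian up to an overall sign and represents the same logical operator, so the cleaning is achieved. Two small points I would be careful about in the write-up, neither of which is a real obstacle: first, ``logical operator'' here should be read as (a representative of) a logical Pauli operator, which is the only version invoked by the disjointness-type arguments of this appendix; second, the bookkeeping of global phases when moving between $\mathcal{P}_n$ and $\mathbb{F}_2^{2n}$, which is routine precisely because all the operators appearing pairwise commute. The main content to get right is thus the pair of symplectic identities $\mathcal{V}_R^{\perp}=\mathcal{V}_{R^c}$ and $(A\cap B)^{\perp}=A^{\perp}+B^{\perp}$ over $\mathbb{F}_2$, and I expect that (rather than anything conceptual) to be where the care is needed.
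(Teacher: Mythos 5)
Your argument is correct, and it cleanly proves the version of the lemma that the paper actually uses. One contextual point: the paper does not reproduce a proof of this statement---it only cites Bravyi and K\"onig and moves on---so there is no in-paper proof to compare against; your derivation stands on its own merits.

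On the substance: the translation to the $\mathbb{F}_2^{2n}$ symplectic picture is exactly right, and the two load-bearing identities---$\mathcal{V}_R^{\perp}=\mathcal{V}_{R^c}$ (because the symplectic form is a direct sum of nondegenerate single-qubit blocks) and $(A\cap B)^{\perp}=A^{\perp}+B^{\perp}$ (standard for a nondegenerate form on a finite-dimensional space, via $(A+B)^\perp=A^\perp\cap B^\perp$ and double-perp)---are both correct and correctly applied. The hypothesis ``logical Paulis on $R$ are trivial'' is indeed $\mathcal{S}^\perp\cap\mathcal{V}_R\subseteq\mathcal{S}$, and dualizing yields $\mathcal{S}^\perp\subseteq\mathcal{S}+\mathcal{V}_{R^c}$, which is the cleaning statement for Pauli logicals. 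This is the same content as the dimension-counting proof in Bravyi--Terhal (which tallies $k(R)+k(R^c)=2k$), just packaged as a single dualization step; you get a slightly slicker argument, they get explicit dimension formulas that they reuse elsewhere. You are also right to flag the restriction to Pauli logicals: the lemma as printed says ``any logical operator $U_L$,'' but the only use made of it in this appendix (Lemma 4.3, where a ``logical Pauli operator'' is cleaned off each $R_i$) is the Pauli case, so your proof suffices for the paper's purposes. Extending to arbitrary logical unitaries is a separate (and more delicate) matter, and you are right not to overclaim it.

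One cosmetic remark: the closing sentence about $h=gs$ being ``Hermitian up to an overall sign'' is unnecessary and slightly beside the point. What matters is only that $gs$ is a Pauli lying in the same coset of $\mathcal{S}$ as $g$ (hence acts identically on the codespace) and is supported on $R^c$; whether its phase makes it Hermitian is irrelevant to being a representative of the logical class. Dropping that clause would tighten the write-up without losing anything.
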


We call such subsets $R$ cleanable.  Equipped with the cleaning lemma, we can now summarize the following lemma from \cite{Pastawski:2014}.

\begin{Lemma} \label{CliffordLemma} \normalfont
Let $S$ be a stabilizer code and let $\{R_0, \ldots, R_k\}$ be a set of cleanable subsets of the physical qubits comprising $S$.  Let $U$ be a logical operator supported on $\cup_{i=0}^k R_i$ such that $U$ is transversal with respect to the $R_i$.  Then, $U_L \in \mathcal{C}_k$.
\end{Lemma}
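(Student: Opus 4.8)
The plan is to induct on $k$. The base case $k=0$ says that a logical operator $U$ supported entirely on the single cleanable set $R_0$ lies in $\mathcal{C}_0$; since $R_0$ is cleanable, any logical operator supported on $R_0$ acts trivially on the codespace, so $U_L$ is the identity (up to phase), which we take to be in $\mathcal{C}_0$ — or, adjusting indexing to match \cite{Pastawski:2014}, the base case $k=1$ asserts $U_L$ is a logical Pauli. Either way the base case is immediate from the definition of cleanable.

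For the inductive step, suppose the claim holds for $k-1$ and let $U$ be supported on $\bigcup_{i=0}^k R_i$ and transversal with respect to this partition, so $U = U_0 \otimes U_1 \otimes \cdots \otimes U_k$ with $U_i$ supported on $R_i$. To show $U_L \in \mathcal{C}_k$, I must show that $U_L P_L U_L^\dag \in \mathcal{C}_{k-1}$ for every logical Pauli $P_L$. Fix such a $P_L$. Because $R_0$ is cleanable, the cleaning lemma gives a representative $P_L'$ of $P_L$ supported on $R_0^c = \bigcup_{i=1}^k R_i$. Now compute the conjugate using the clean representative: since $U$ is transversal across the $R_i$ and $P_L'$ avoids $R_0$, the factor $U_0$ commutes with $P_L'$, so
\[
U_L P_L U_L^\dag = U P_L' U^\dag = \left(\bigotimes_{i=1}^k U_i\right) P_L' \left(\bigotimes_{i=1}^k U_i\right)^\dag,
\]
which is a logical operator supported on $\bigcup_{i=1}^k R_i$ and transversal with respect to the $k$-element family $\{R_1,\ldots,R_k\}$. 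Applying the inductive hypothesis to this operator (relabelling $R_1,\ldots,R_k$ as $R_0,\ldots,R_{k-1}$) yields that $U_L P_L U_L^\dag \in \mathcal{C}_{k-1}$. Since $P_L$ was an arbitrary logical Pauli, $U_L \in \mathcal{C}_k$ by definition.

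The main subtlety — and the step I would be most careful about — is the claim that the conjugate $U P_L' U^\dag$ is again a \emph{logical} operator, not merely a physical one, so that the inductive hypothesis applies. This holds because $U$ preserves the codespace and $P_L'$ is a logical operator, hence $U P_L' U^\dag$ preserves the codespace and acts as $U_L P_L U_L^\dag$ on it; one also needs that its action on the code is independent of how the physical representative was chosen, which is exactly the well-definedness of logical operators. A secondary point to state carefully is that transversality with respect to $\{R_0,\ldots,R_k\}$ genuinely restricts to transversality with respect to the sub-family $\{R_1,\ldots,R_k\}$ on the smaller support, and that the clean representative $P_L'$ can be taken inside $\bigcup_{i=1}^k R_i$ — both are direct consequences of the hypotheses and the cleaning lemma. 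Everything else is bookkeeping on the recursive definition of the Clifford hierarchy.
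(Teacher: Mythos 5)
Your overall strategy matches the paper's --- induction on $k$, clean the Pauli off one of the $R_i$, reduce to a family of $k$ cleanable sets, and invoke the inductive hypothesis. The difference is that the paper conjugates \emph{and} multiplies by $P^\dag$ (i.e.\ works with the group commutator $[U,P]=UPU^\dag P^\dag$, cleaning off $R_k$), while you work with the bare conjugate $UP_L'U^\dag$ (cleaning off $R_0$). That discrepancy is exactly where a gap appears.

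The problematic step is your claim that $UP_L'U^\dag$ is supported on $\bigcup_{i=1}^k R_i$. The cleaning lemma only guarantees a representative $P_L'$ supported on $R_0^c$, and $R_0^c$ may contain qubits lying outside $\bigcup_{i=0}^k R_i$ entirely; the lemma does not assume the $R_i$ cover all physical qubits. On those leftover qubits $U$ acts trivially, so $UP_L'U^\dag$ restricted there is just $P_L'$, which can be nontrivial. Hence $UP_L'U^\dag$ is \emph{not} in general supported on $\bigcup_{i=1}^k R_i$, and the inductive hypothesis does not apply to it as stated. (In the one downstream application, Corollary~D.5, the $R_i$ do partition the block, so your argument happens to go through there; but as a proof of the lemma it is incomplete.) The commutator form avoids this automatically: $[U,P_L']$ equals $P_L'(P_L')^\dag = I$ on any qubit where $U$ is trivial, so $\mathrm{Supp}([U,P_L'])\subseteq\bigcup_{i=1}^k R_i$, and the commutator is still transversal with respect to $\{R_1,\dots,R_k\}$. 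The inductive hypothesis then gives $[U_L,P_L]\in\mathcal{C}_{k-1}$, and since $\mathcal{C}_{k-1}$ is closed under right-multiplication by Paulis this yields $U_LP_LU_L^\dag\in\mathcal{C}_{k-1}$, which is what you need. Your remaining bookkeeping is fine, though I would drop the aside that the $k=1$ base case is ``immediate from the definition of cleanable'' --- it isn't; it requires exactly the commutator step above. The $k=0$ base case with $\mathcal{C}_0$ taken to be scalar multiples of the identity is a clean convention, provided you say explicitly that you are extending the paper's definition of the hierarchy (which starts at $\mathcal{C}_1$) one level down.
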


\begin{proof}
We proceed by induction on $k$.  In the base case, we have a logical operator $U$ supported on cleanable subsets $R_0 \cup R_1$.  Let $P$ be any logical Pauli operator cleaned off of $R_1$, and let $[ \cdot, \cdot ]$ denote the group commutator.  Since in a stabilizer code the logical Pauli operators are transversal, we have $Supp([U,P]) \subseteq R_0$, which by cleanability implies that $[U_L,P_L] = cI_L$.  Since this is true for any $P_L$, it must be that $U_L \in \mathcal{C}_1$.

Similarly, suppose $U$ is supported on $\cup_{i=0}^k R_i$.  Then, cleaning any logical Pauli $P$ off of $R_k$, we see that $Supp([U,P]) \subseteq \cup_{i=0}^{k-1} R_i$.  By our inductive hypothesis, $[U_L, P_L] \subseteq \mathcal{C}_{k-1}$, which implies $U_LP_LU_L^\dag \in \mathcal{C}_{k-1}$ for any logical Pauli $P_L$.  Thus $U_L \in \mathcal{C}_k$, completing the proof.
\end{proof}

This argument generalizes to subsystem codes, and we refer the reader \cite{Pastawski:2014} for a more complete description.  As a consequence we obtain the following.

\begin{Corollary}
No erasure-correcting stabilizer code can implement a classical reversible universal transversal gate set.
\end{Corollary}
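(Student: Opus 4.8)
The plan is to show that every transversal logical gate of an erasure-correcting stabilizer code sits in a fixed level $\mathcal{C}_{n-1}$ of the Clifford hierarchy, and then to observe that no fixed level of the hierarchy can contain a classically universal gate set.

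Fix an erasure-correcting stabilizer code $C=[[n,1,d]]$, so $d\ge 2$, and let $U_L\in\mathcal{T}_C$ be a transversal logical gate realizing a $p$-qubit gate $U$; it acts on $C^{\otimes p}$, which is itself an $[[np,p,d]]$ stabilizer code. First I would identify the $n$ transversal subsystems $R_1,\dots,R_n$ of $C^{\otimes p}$, where $R_i$ consists of the $i$-th physical qubit drawn from each of the $p$ blocks, and argue that each $R_i$ is cleanable. This follows because $R_i$ meets every block in exactly one qubit: by the Knill--Laflamme conditions (Definition~\ref{QECC}) any single-qubit operator on a block with $d\ge 2$ acts as a scalar on that block's codespace, so any operator supported on $R_i$ — being a tensor product over blocks of single-qubit operators — acts as a scalar on $C^{\otimes p}$, i.e., $R_i$ is cleanable (equivalently, erasure-correctable). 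More generally any union of at most $d-1$ of the $R_i$ meets each block in at most $d-1$ qubits and is therefore cleanable, so I can merge the $n$ subsystems into $\lceil n/(d-1)\rceil$ cleanable regions.

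Next I would apply Lemma~\ref{CliffordLemma}. By definition of transversality, $U_L$ is a codespace-preserving operator on $C^{\otimes p}$ that factors as a tensor product across $R_1,\dots,R_n$, hence is supported on $\bigcup_i R_i$ and transversal with respect to the merged cleanable regions. Lemma~\ref{CliffordLemma} then yields $U_L\in\mathcal{C}_{\lceil n/(d-1)\rceil-1}\subseteq\mathcal{C}_{n-1}$. Since $n$ is a fixed parameter of $C$, this shows that \emph{every} transversal logical gate of $C$, on any number of logical qubits, lies in the single level $\mathcal{C}_{n-1}$.

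Finally I would derive the contradiction: a classically reversible universal transversal gate set would in particular let us implement the reversible Boolean gate $C^{n-1}X$ transversally, so $\mathcal{T}_C$ would contain a logical $C^{n-1}X$ (padding with ancillary logical qubits cannot escape $\mathcal{C}_{n-1}$, since membership in $\mathcal{C}_{n-1}$ descends to subsystems). But $C^{n-1}X\in\mathcal{C}_n\setminus\mathcal{C}_{n-1}$ — reversible circuits saturate the Clifford hierarchy through exactly these gates — contradicting $\mathcal{T}_C\subseteq\mathcal{C}_{n-1}$. Hence no erasure-correcting stabilizer code implements a classically reversible universal transversal gate set. I expect the main obstacle to be the cleanability step: carefully verifying, using only $d\ge 2$, that the one-qubit-per-block structure of each $R_i$ forces operators supported there to act trivially on $C^{\otimes p}$, and that the transversal decomposition of $U_L$ is exactly the hypothesis required by Lemma~\ref{CliffordLemma}; the ancilla bookkeeping in the last step is a minor additional point.
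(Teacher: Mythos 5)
Your proof is correct and follows essentially the same route as the paper's: partition the physical qubits of $C^{\otimes p}$ into the $n$ transversal subsystems, argue cleanability from $d\geq 2$, apply Lemma~\ref{CliffordLemma} to trap every transversal logical gate in a fixed level of the Clifford hierarchy, and then invoke that reversible circuits (via $C^kX$) escape any fixed level. Your two refinements — merging the $R_i$ into $\lceil n/(d-1)\rceil$ cleanable regions to sharpen the level to $\mathcal{C}_{\lceil n/(d-1)\rceil -1}$, and explicitly noting that ancilla padding cannot lift the logical action out of a fixed $\mathcal{C}_k$ (which indeed holds: if $V\in\mathcal{C}_k$ and $V(\ket{\psi}\otimes\ket{0})=(W'\ket{\psi})\otimes\ket{0}$ for all $\ket{\psi}$, an induction on $k$ conjugating Paulis on the non-ancilla register shows $W'\in\mathcal{C}_k$) — are welcome but do not change the underlying argument.
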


\begin{proof}
Partition the code block into single subsystem subsets $\{R_1 , \ldots, R_n\}$ where $n$ is the length of the code.  Then, since the code is erasure-correcting, any logical operator supported on a single subsystem must act trivially on the codespace, and so these subsets are cleanable.  By the lemma, any transversal logical gate must lie in $\mathcal{C}_n$.  Since reversible circuits saturate $\mathcal{C}$, they cannot be logically transversally implementable.
\end{proof}

\end{document}